\newtheorem{theorem}{Theorem}[section]
\newtheorem{lemma}[theorem]{Lemma}
\newtheorem{meta-theorem}[theorem]{Meta-Theorem}
\newtheorem{corollary}[theorem]{Corollary}
\newtheorem{proposition}[theorem]{Proposition}
\newtheorem{observation}[theorem]{Observation}
\newtheorem{definition}[theorem]{Definition}
\DeclareMathOperator{\polylog}{polylog}
\newcommand{\FullOrShort}{full}
  \newcommand{\fullOnly}[1]{#1}
  \newcommand{\shortOnly}[1]{}
    \newcommand{\fullOnly}[1]{}
    \newcommand{\shortOnly}[1]{#1}
\definecolor{darkgreen}{rgb}{0,0.5,0}
\crefname{theorem}{Theorem}{Theorems}
\Crefname{lemma}{Lemma}{Lemmas}
\Crefname{observation}{Observation}{Observations}
\Crefname{claim}{Claim}{Claims}
\Crefname{equation}{}{}
\algnewcommand\algorithmicswitch{\textbf{switch}}
\algnewcommand\algorithmiccase{\textbf{case}}
\algnewcommand\algorithmicparfor{\textbf{parfor}}
\algnewcommand\algorithmicpardo{\textbf{do}}
\algnewcommand\algorithmicendparfor{\textbf{end\ parfor}}
\crefname{theorem}{Theorem}{Theorems}
\Crefname{lemma}{Lemma}{Lemmas}
\Crefname{observation}{Observation}{Observations}
\Crefname{equation}{}{}
\newcommand{\eps}{\varepsilon}
\newcommand{\poly}{{\rm poly}}
\newcommand{\sgn}{\operatorname{\text{{\rm sgn}}}}
\newcommand{\ceil}[1]{\left\lceil #1 \right\rceil}
\providecommand{\E}{{\rm \mathbb{E}}}
\providecommand{\Var}{{\rm \mathbb{V}ar}}
\newcommand{\drivestreams}{drivestreams}
\renewcommand{\paragraph}[1]{\vspace{0.15cm}\noindent {\bf #1}:}
\DeclarePairedDelimiter{\abs}{\lvert}{\rvert}
\let\oldabs\abs
\def\abs{\@ifstar{\oldabs}{\oldabs*}}
\date{}
\title{\huge Improved parallel derandomization via finite automata with applications}
\author{Jeff Giliberti\footnote{University of Maryland, email: \texttt{jeffgili@umd.edu}. JG gratefully acknowledges financial support by the Fulbright U.S. Graduate Student Program, sponsored by the U.S. Department of State and the Italian-American Fulbright Commission. This content does not necessarily represent the views of the Program.}, \ \  David G. Harris\footnote{University of Maryland, email: \texttt{davidgharris29@gmail.com}}}
\date{}
\begin{document}
\maketitle
\begin{abstract}
A central approach to algorithmic derandomization is to construct probability distributions with small support that ``fool” randomized algorithms, often enabling efficient parallel (NC) implementations. An abstraction of this idea is fooling polynomial-space statistical tests computed via finite automata~(Sivakumar 2002); this encompasses a wide range of properties including $k$-wise independence and sums of random variables.

We present new parallel algorithms to fool automata, with significantly reduced processor complexity. Briefly, our approach is to iteratively sparsify distributions via work-efficient lattice discrepancy rounding, while tracking an aggregate weighted error that is determined by the Lipschitz value of the statistical tests.

We illustrate with applications to the Gale-Berlekamp Switching Game and approximate MAX-CUT via SDP rounding. These involve several optimizations, including truncating the state space of the automata and using FFT-based convolutions to compute transition probabilities efficiently.
\end{abstract}

This is an extended version of a paper appearing in the European Symposium on Algorithm (ESA) 2025.

\section{Introduction}
A fundamental problem in the theory of computation is to \emph{derandomize} existing randomized algorithms. Such randomized algorithms typically use a large number of independent random bits. One main  approach to derandomization is to construct a probability distribution which is much smaller (of polynomial instead of exponential size), to ``fool" the randomized algorithm. That is, the behavior of relevant statistics should be similar when presented with fully-independent random bits vs. bits drawn from a small, carefully-constructed correlated distribution. This probability space can be searched exhaustively, and in parallel, to find a specific input with desired properties. 

A simple and popular example of this technique comes from a probability space with $k$-wise independence, for constant $k$ \cite{ABI86, J74, KY94, L85}. This preserves  basic statistical properties such as means and variances \cite{BR94, SSS95}.  There are many other constructions for more-advanced properties, e.g. near-$k$-wise independence \cite{AGHP92, AMN98, CRS94, GGLN92, NN90}, fooling halfspaces or polytopes \cite{KM21, OSTK20, PMRZ11, GMRTV12}, fooling ordered branching programs \cite{BRRY, CHLTW23, MRT19, HPV21, N92} and so on. Often, the randomized algorithm itself must be significantly modified or ``slowed down" to work with these spaces \cite{H19hmis}.

A significant abstraction of these methods is the construction of probability spaces that fool statistical properties (also called ``tests") computed by finite automata \cite{BG92, H19, KK97, MRS01, N90, N92, S02}. Such tests are ubiquitous in randomized algorithm analysis, encompassing $k$-wise independence, sums of random variables, and many other properties. For example, they are used in deterministic parallel algorithms for finding balanced independent sets in graphs \cite{harris2016efficient}, for applications of the Lov\'{a}sz Local Lemma in combinatorics \cite{H23}, for covering and packing integer programs \cite{BKNS12, S01}, for undirected  connectivity problems \cite{N92}, and more.

\subsection{Our contribution}
We present new, simplified parallel algorithms to fool automata, with significantly reduced processor complexity. It will require several preliminary definitions to describe how these automata work and the error bounds to use. However, we provide a summary of our algorithm performance as follows:



\begin{theorem}[Simplified]
\label{simpthm}
Consider a probability space $\Omega$ on $n$ binary random variables and a collection of $\ell$ statistical tests over $\Omega$, each with $\eta$ possible states and Lipschitz value $\lambda$.

There is a deterministic algorithm to generate a distribution $D$ of size $\eps^{-2} \polylog(n, \eta, \ell, 1/\eps)$ to simultaneously ``fool'' all statistical tests to \emph{absolute weighted error} $\lambda \eps$. It uses $\tilde O(n \ell \eta  \eps^{-2} + n \ell \eta^{\omega})$ processors and polylogarithmic time, where $\omega$ is the exponent of matrix multiplication.
\end{theorem}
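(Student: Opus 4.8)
The plan is to build the distribution $D$ incrementally, processing the $n$ random variables one at a time (or in small blocks), while maintaining an explicit finite list of ``atoms'' — points of $\{0,1\}^*$ together with rational weights — whose size we keep at roughly $\eps^{-2}\polylog$. At a generic step we have a distribution $D_t$ on the first $t$ variables that fools all $\ell$ automata on their first $t$ transitions to weighted error $\lambda\eps_t$ for some slowly-growing $\eps_t$. To extend to $D_{t+1}$, I would first form the ``doubled'' distribution in which every atom is split according to the true conditional law of variable $x_{t+1}$; this at most doubles the support but keeps the error unchanged. The support is now too large, so the second move is a \emph{sparsification}: choose new rational weights, supported on a small sub-collection of the atoms, that approximately reproduce, for every automaton, the induced distribution over its $\eta$ states after step $t+1$. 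This is exactly a lattice/linear-discrepancy problem — we must round a fractional weight vector so that a bounded number of linear ``state-occupancy'' functionals are nearly preserved — and the paper advertises a work-efficient lattice rounding routine for precisely this purpose, so I would invoke it as a black box.

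The key accounting step is the error analysis. Each sparsification perturbs the state distribution of each automaton by a controlled amount; because each statistical test has Lipschitz value $\lambda$, a perturbation of mass $\delta$ in the state vector changes the reported value of the test by at most $\lambda\delta$. I would set up a potential/invariant of the form ``aggregate weighted error $\le \lambda\eps_t$'' and show each round adds at most $\lambda\cdot(\text{per-round discrepancy})$, so after $n$ rounds the total is $\lambda$ times the sum of per-round discrepancies. Choosing the per-round lattice-rounding accuracy to be $\eps/\poly(n,\log\eta,\log\ell)$ makes this telescoping sum at most $\lambda\eps$, which is the claimed absolute weighted error. The support bound follows because each application of the lattice rounding routine leaves only $O(\#\text{functionals})$ atoms with nonzero weight — here $O(\ell\eta)$ naively, but with the dependence on the accuracy parameter this is where the $\eps^{-2}$ and the polylog factors enter; I would track the bit-complexity of the weights to confirm the final support size is $\eps^{-2}\polylog(n,\eta,\ell,1/\eps)$.

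For the processor and time bounds I would argue that each of the $n$ rounds is implementable in $\poly\log$ parallel time. The dominant costs per round are: (i) evaluating, for all $\ell$ automata and all $\eta$ states, the updated occupancy vector over all current atoms — this is $\tilde O(\ell\eta\cdot|\text{support}|)=\tilde O(\ell\eta\eps^{-2})$ work; and (ii) running the lattice rounding routine, whose cost on a system with $O(\ell\eta)$ constraints involves a matrix computation contributing the $\tilde O(n\ell\eta^\omega)$ term (the $\eta^\omega$ coming from manipulating the per-automaton $\eta\times\eta$ transition structure, times $\ell$ automata, times $n$ rounds). Summing over $n$ rounds gives $\tilde O(n\ell\eta\eps^{-2}+n\ell\eta^\omega)$ processors. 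Finally, exhaustive parallel search over the resulting support of size $\eps^{-2}\polylog$ certifies a single good sample point if one is needed, within the same bounds.

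The main obstacle I anticipate is the error bookkeeping in the sparsification step: one must verify that the lattice rounding routine can be applied to \emph{all $\ell$ automata simultaneously} with only a mild (polylog, not polynomial in $\ell$ or $\eta$) blow-up in support, and that the notion of ``weighted error'' is exactly the quantity preserved by the rounding — i.e.\ matching the abstract discrepancy guarantee of the rounding subroutine to the Lipschitz-weighted error metric of the statistical tests. Getting the dependence to be $\polylog(\ell,\eta)$ rather than $\poly(\ell,\eta)$ in the support, while only $\poly$ in the number of rounding constraints for the processor count, is the delicate balance the whole argument turns on.
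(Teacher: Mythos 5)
Your proposal has a fatal structural flaw: the left-to-right, one-variable-at-a-time construction has $n$ \emph{sequentially dependent} rounds (the sparsification at step $t+1$ operates on the atoms output by step $t$, and each round itself invokes a nontrivial parallel lattice-rounding routine). This gives parallel depth $\tilde\Omega(n)$, not the polylogarithmic time claimed in the theorem. This is exactly the obstruction that makes method-of-conditional-expectation-style derandomizations hard to parallelize, and it is why the paper instead uses a balanced binary merging: distributions $D_{i,t}$ over disjoint windows $(t,2^i)$ are built independently and in parallel, adjacent pairs are combined as Cartesian products $D_{i,t}\times D_{i,t+h}$ (never materialized explicitly), and each product is re-sparsified by \textsc{Reduce}. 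Only $\lg n$ sequential levels occur, so per-level error $\eps/\lg n$ suffices and the support stays at $\eps^{-2}\polylog$. Processing in ``small blocks,'' as you suggest, does not repair this: blocks large enough to reduce the depth to polylog would blow up the support exponentially in the block length during your doubling step.

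Two further issues in the error and support accounting. First, you propose to preserve, for each automaton, ``the induced distribution over its $\eta$ states,'' and then convert state-vector perturbations to test-value error via the global Lipschitz constant $\lambda$. With that metric each of your $n$ rounds must be accurate to $\eps/n$, and since the lattice-rounding discrepancy scales like $1/\sqrt{m}$ in the support size $m$, this forces $m=\Omega(n^2/\eps^2)$, not $\eps^{-2}\polylog$. The paper avoids this by rounding against the \emph{expected final weight} vectors $V_t(s)=T_{t,n}(s,W)$, so that the per-merge error is proportional to the sensitivity/confusion of that step, and these quantities telescope to the total variability $\mathfrak V(s)$ (\Cref{prop-add-err0s,prop-add-err-add}); computing the $V_t$ is where the $n\ell\eta^{\omega}$ matrix-product term actually comes from (\Cref{prop-generic_omega}), not from inside the rounding routine as you suggest. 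Second, your claim that the lattice rounding ``leaves only $O(\#\text{functionals})$ atoms with nonzero weight'' misdescribes the subroutine: \Cref{thm-lattice} returns a $\{0,1\}$ rounding of the given fractional vector, so the output support is whatever number of atoms you feed it; the support bound must be enforced by fixing $m=\Theta(\eps^{-2}\polylog)$ up front and rounding index bits as \textsc{Reduce} does, not by hoping the rounding concentrates mass on few atoms.
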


By way of comparison, the algorithm of \cite{H19} would require roughly $O(n^3 \ell^2 \eta^4 \eps^{-2})$ processors for a distribution of size $O(\ell \eta^2 \epsilon^{-2})$. Our full results are more general, easily allowing for non-binary alphabets, or more complex automata types; see \Cref{th:main1} for details. We emphasize that the algorithm is still typically more efficient even for applications with no particular Lipschitz bounds. For example, if we only know that the weight function takes values in the range $[0,1]$, then we trivially have $\lambda = 1$ and \Cref{simpthm} would be significantly stronger than prior algorithms.

We illustrate with two prototypical applications: the Gale-Berlekamp Switching Game and SDP rounding for approximate MAX-CUT. 

\begin{theorem}[Gale-Berlekamp Switching Game]
\label{th1}
Given an $n \times n$ matrix $A$, there is a deterministic parallel algorithm using $\tilde O (n^{3.5})$ processors and polylogarithmic time to find $x, y \in \{-1, +1\}^n$ satisfying $\sum_{i,j} A_{i,j} x_i y_j \geq (\sqrt{2/\pi} - o(1)) n^{3/2}.$
\end{theorem}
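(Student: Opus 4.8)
The plan is to derandomize, in parallel, the classical one-sided probabilistic argument for the game (here $A \in \{-1,+1\}^{n\times n}$, the usual setting). If $y \in \{-1,+1\}^n$ is drawn uniformly and we then set $x_i = \sgn(\langle A_i, y\rangle)$ for each row $A_i$ of $A$, we get $\sum_{i,j} A_{i,j} x_i y_j = \sum_i |\langle A_i, y\rangle|$, and since each $\langle A_i, y\rangle$ is a sum of $n$ independent Rademacher variables, $\E|\langle A_i, y\rangle| = (\sqrt{2/\pi}-o(1))\sqrt n$ (central limit theorem, with an explicit rate from Berry--Esseen), hence $\E_y[\sum_i |\langle A_i, y\rangle|] = (\sqrt{2/\pi}-o(1))\,n^{3/2}$. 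So it suffices to build deterministically a polylog-size distribution $D$ on $y$ that fools the $\ell = n$ statistics $w_i(y) := |\langle A_i, y\rangle|$ each to additive error $\lambda\eps$; the maximizing $y \in \mathrm{supp}(D)$ then satisfies $\sum_i w_i(y) \ge \E_D[\sum_i w_i] \ge (\sqrt{2/\pi}-o(1))n^{3/2}$ provided the total fooling loss $n\lambda\eps$ is $o(n^{3/2})$, and we output it together with the corresponding sign vector $x$.

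To feed $w_i$ into \Cref{th:main1}, realize it via a finite automaton that scans $y_1,\dots,y_n$ while maintaining the partial sum $s \gets s + A_{i,j}y_j$ and attaches weight $|s|$ to the terminal state. Two optimizations keep the cost down. First, \emph{truncate the state space}: send any state with $|s| \ge T := \Theta(\sqrt{n\log n})$ to an absorbing state of weight $0$. The walk reaches height $T$ with probability $n^{-\Omega(1)}$ (reflection principle), and then $|s|\le n$, so truncation only decreases each $\E_\Omega[w_i]$, by $o(\sqrt n)$, while cutting the automaton down to $\eta = \tilde O(\sqrt n)$ states; crucially the truncated weight is a pointwise lower bound on $w_i$, so the final inequality is preserved. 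Second, \emph{compute transition probabilities by FFT}: over uniform $y$ each step is an unbiased $\pm 1$ move irrespective of $A_{i,j}$, so the $t$-step state distribution is a $t$-fold self-convolution, obtained in $\tilde O(\eta)$ work via a pointwise power in Fourier space --- exactly the kind of query the rounding routine of \Cref{th:main1} makes.

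Now apply \Cref{th:main1} with $n$ binary variables, $\ell = n$ truncated automata of $\eta = \tilde O(\sqrt n)$ states, Lipschitz value $\lambda = O(1)$ for $|\cdot|$, and accuracy $\eps$ a small constant (so $n\lambda\eps$ plus the $o(n^{3/2})$ truncation loss stays $o(n^{3/2})$). This yields $D$ of polylog size; the FFT convolutions handle all transition-probability computations in $\tilde O(n\ell\eta) = \tilde O(n^{2.5})$ work, leaving the $\tilde O(n\ell\eta^{\omega}) = \tilde O(n^{2+\omega/2})$ term --- at most $\tilde O(n^{3.5})$ --- as the bottleneck, all in polylogarithmic depth. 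Finally, in parallel evaluate $\sum_i |\langle A_i, y\rangle|$ for each of the polylog many $y \in \mathrm{supp}(D)$ (a batch of matrix--vector products, negligible here), take the best one, and set $x_i = \sgn(\langle A_i, y\rangle)$. I expect the main obstacle to be the truncation analysis: simultaneously guaranteeing that the absorbing automaton is a pointwise under-estimate of $w_i$, is within $o(n^{3/2})$ of it in expectation under $\Omega$, and is small enough ($\eta = \tilde O(\sqrt n)$) to meet the target complexity --- together with checking that the FFT convolution genuinely implements every transition-probability query of the general algorithm. The remaining ingredients (Berry--Esseen, reflection bound, averaging over $\mathrm{supp}(D)$) are routine.
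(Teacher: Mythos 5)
Your outline matches the paper's (sign trick $x_i=\sgn\langle A_i,y\rangle$, one counter automaton per row, truncation to $\tilde O(\sqrt n)$ states, then the general fooling theorem), but two steps as written would fail.

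First, the error accounting. The framework's guarantee (\Cref{th:main1}) is $|T_D(s,W)-T_\Omega(s,W)|\le\eps\,\mathfrak V(s)$, where $\mathfrak V(s)$ \emph{sums} the per-step confusion over all $n$ timesteps; the ``$\lambda\eps$'' of \Cref{simpthm} is this aggregate, not the per-coordinate Lipschitz constant. For $W(c)=|c|$ the per-step confusion is $\Theta(1)$ at reachable states, so $\mathfrak V(0)=\Theta(n)$ and the per-row error is $\Theta(\eps n)$. With your constant $\eps$ this is $\Theta(n)\gg\sqrt n$, swamping the $\Theta(\sqrt n)$ expectation of each $|\langle A_i,y\rangle|$; you must take $\eps=o(n^{-1/2})$ (the paper uses $\eps=1/\sqrt{n\log n}$). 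This also changes your complexity bookkeeping: $\eps^{-2}=\tilde\Theta(n)$ makes the term $n\ell\eta\eps^{-2}=\tilde O(n\cdot n\cdot\sqrt n\cdot n)=\tilde O(n^{3.5})$ a genuine bottleneck, not the negligible $n^{2.5}$ you claim (the final answer $\tilde O(n^{3.5})$ survives, but for a different reason than you give).

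Second, the hard truncation to an absorbing weight-$0$ state is incompatible with the framework. Your ``pointwise lower bound'' argument controls the one-sided loss in $\E_\Omega$, but the fooling guarantee is driven by the confusion of the \emph{truncated} weight function, which is a max over all reachable states: at a state with $|c|=T-O(1)$, one input leads to the absorbing state (expected final weight $0$) and another keeps expected final weight $\approx T=\tilde\Theta(\sqrt n)$, so the confusion there is $\tilde\Theta(\sqrt n)$ rather than $O(1)$, inflating $\mathfrak V$ to $\tilde\Theta(n^{3/2})$ and making the guarantee vacuous for any useful $\eps$. The paper avoids this by multiplying the weight by a piecewise-linear damping function $\phi$ that decays to $0$ over a window of width $B/3$ before the cutoff (\Cref{sec:counter}, \Cref{vvprop99}), so a single step changes the damped weight by $O(1)+\tilde\Delta\cdot O(M_t/B)=O(1)$; one then separately shows (via a coupling with the untruncated automaton) that the damping costs only $o(\sqrt n)$ in expectation. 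You correctly identified the truncation analysis as the crux, but the missing idea is this smooth damping, not the monotonicity you propose. (Minor: the paper computes the needed vectors $\hat V_t$ for the \emph{untruncated} counters by generic matrix products within the $\tilde O(n^{3.5})$ budget; your FFT shortcut is plausible here but is the optimization the paper reserves for MAX-CUT.)
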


It is interesting that the Gale-Berlekamp analysis revolves around \emph{anti-concentration} bounds, which are precisely the opposite of discrepancy-minimization. Our algorithm  for this problem beats the complexity of the optimized algorithm of \cite{H19} which required $n^{5+o(1)}$ processors. 

\begin{theorem}[MAX-CUT Approximation]
\label{th2} 
      Given an $n$-vertex $m$-edge graph $G$, there is a deterministic parallel algorithm using $\tilde O (m n^3)$ processors and polylogarithmic time to find an $\alpha (1 - \eps)$ approximate MAX-CUT of $G$, where $\alpha \approx 0.878$ is the Goemans-Williamson approximation constant \cite{GW95} and $\eps > 0$ is an arbitrary constant.
\end{theorem}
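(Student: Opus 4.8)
The plan is to parallelize and derandomize the Goemans--Williamson pipeline, invoking \Cref{th:main1} (summarized in \Cref{simpthm}) for the derandomization step. Fix a constant $\eps_0 = \Theta(\eps)$. First, solve the MAX-CUT semidefinite relaxation to multiplicative accuracy $1-\eps_0$ with a deterministic parallel SDP solver --- e.g.\ matrix multiplicative weights, whose only nontrivial primitive is a matrix exponential, computed by a $\polylog$-length truncated Taylor series with powers formed by repeated squaring. This produces unit vectors $v_1,\dots,v_n \in \R^d$ with $d \le n$ and $\sum_{(i,j)\in E} \tfrac{1-\langle v_i,v_j\rangle}{2} \ge (1-\eps_0)\,\mathrm{OPT}$, using $\tilde O(n^3)$ processors (dominated by the rounding step). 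Goemans--Williamson draws $g \sim N(0,I_d)$, sets $x_i = \sgn\langle g,v_i\rangle$, and cuts edge $(i,j)$ with probability $\tfrac1\pi\arccos\langle v_i,v_j\rangle \ge \alpha\cdot\tfrac{1-\langle v_i,v_j\rangle}{2}$, for expected cut $\ge \alpha(1-\eps_0)\,\mathrm{OPT}$. It remains to derandomize the choice of $g$ within the finite-automaton model.

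To this end I would replace $g$ by a discrete vector $\hat g$: truncate each coordinate to $[-R,R]$ with $R = \Theta(\sqrt{\log(mn/\eps)})$ and unbiasedly round it to a grid of spacing $\theta = \tilde\Theta(\eps)$, so each $\hat g_k$ is an independent draw from an explicit distribution on $O(R/\theta) = \tilde O(1)$ grid points; the probability space $\Omega$ then consists of the $d \le n$ variables $\hat g_k$. Under the natural coupling, the perturbation $\langle\hat g - g,v_i\rangle$ has conditional mean $0$ and conditional standard deviation $\le\theta$, so Hoeffding plus the anti-concentration bound $\Prob{|\langle g,v_i\rangle|\le t} = O(t)$ (the standard Gaussian density is bounded) yields $\bigl|\Prob{e\text{ cut under }\hat g} - \Prob{e\text{ cut under }g}\bigr| = \tilde O(\theta) = O(\eps)$ for every edge $e$ --- small enough, as we will sum over the $m$ edges against $\mathrm{OPT} = \Theta(m)$. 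For each edge $e=(i,j)$ I introduce one statistical test: a two-dimensional automaton that, reading $\hat g_1,\dots,\hat g_d$ in order, maintains the pair $\bigl(\sum_k (v_i)_k\hat g_k,\ \sum_k (v_j)_k\hat g_k\bigr)$, rounded after each step to a grid of spacing $\delta$ on a box $[-R',R']^2$ with $R' = \Theta(\sqrt{\log(mn/\eps)})$ (outcomes leaving the box have negligible probability and are truncated away), giving $\eta = O((R'/\delta)^2)$ states; its final weight is the $\{0,1\}$-indicator that the two coordinates have opposite sign. Choosing $\delta = \tilde\Theta(1/n)$ keeps the accumulated rounding drift over the $\le n$ steps well below the $\Theta(1)$ scale on which cut probabilities vary, so $\eta = \tilde O(n^2)$; and the weight being $\{0,1\}$-valued makes its Lipschitz value trivially $\lambda = O(1)$.

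Applying \Cref{th:main1} with $n_{\mathrm{vars}} = \tilde O(n)$, $\ell = m$ tests, $\eta = \tilde O(n^2)$, $\lambda = O(1)$, and error parameter $\Theta(\eps)$ yields a distribution $D$ of support $\poly(n)$ that fools every edge test to additive error $O(\eps)$. Combining the three slacks --- SDP accuracy $\eps_0\,\mathrm{OPT}$, discretization $O(\eps)\cdot m$, fooling $O(\eps)\cdot m$ --- and using $\mathrm{OPT}\ge m/2$, the $D$-expectation of the cut is at least $\alpha(1-\eps)\,\mathrm{OPT}$ after rescaling the implicit constants. We then compute, in parallel over the $\poly(n)$ outcomes in the support of $D$, the vector $x_i = \sgn\langle\hat g,v_i\rangle$ and the actual cut of $x$ in $G$, and output the best one; its cut value is at least the $D$-expectation (the $O(\delta)$-per-vertex discrepancy between the grid-rounded and exact signs is absorbed into the slack), hence at least $\alpha(1-\eps)\,\mathrm{OPT}$.

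For the processor count, \Cref{th:main1} costs $\tilde O(n_{\mathrm{vars}}\cdot\ell\cdot\eta\cdot\eps^{-2})$ plus the cost of computing the automata's transition operators; with $\eps$ constant the first term is $\tilde O(n\cdot m\cdot n^2) = \tilde O(mn^3)$. The decisive optimization is that each transition here adds a discretized value to a running sum, hence is a two-dimensional convolution that can be applied in $\tilde O(\eta)$ time by FFT rather than the $\eta^\omega$ a black-box matrix-multiplication treatment would cost; the state space is likewise truncated to the $\tilde O(n^2)$-size box rather than allowed to grow. This keeps the total at $\tilde O(mn^3)$ processors and polylogarithmic depth, also absorbing the SDP solve. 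I expect the main obstacle to be exactly this quantitative bookkeeping: tuning $R$, $\theta$, and $\delta$ so that every edge's cut probability is preserved to within $O(\eps)$ while the base-variable count and the state count $\eta$ both stay polynomial, and confirming that the convolution structure genuinely survives so that the FFT speedup applies and the bound lands at $\tilde O(mn^3)$ rather than an $\eta^\omega$-inflated version. The remaining ingredients --- the deterministic parallel SDP solver, the Goemans--Williamson inequality, and the final exhaustive search --- are standard.
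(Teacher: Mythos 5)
There is a genuine gap, and it sits exactly at the step you flagged as "trivial": the claim that the $\{0,1\}$-indicator weight makes the Lipschitz value $\lambda=O(1)$ and hence lets you invoke \Cref{th:main1} with error parameter $\Theta(\eps)$ at cost $\tilde O(n\cdot m\cdot \eta\cdot\eps^{-2})$. The actual guarantee of \Cref{th:main1} is an error of $\eps\,\mathfrak V(s)$, where $\mathfrak V(s)=\sum_{t}\max\mathfrak C(\cdot,t)$ aggregates a per-step confusion over all $n$ steps. For the raw indicator $\mathbbm{1}[\sgn(c_i)\neq\sgn(c_j)]$ the per-step confusion is bounded by $1$ but not by anything smaller in general: at time $t$ there is always a reachable state whose running sum sits within $O(|v_{it}|R)$ of the sign boundary, and if the remaining variance $\sum_{k>t}v_{ik}^2$ is comparable to $v_{it}^2$ (e.g.\ geometrically decaying coordinates) then changing $r_t$ flips the final sign with probability $\Omega(1)$, so $\mathfrak C=\Theta(1)$ at every step and $\mathfrak V(s)$ can be $\Theta(n)$. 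To drive the error down to $O(\eps)$ you would then need the \textsc{Fool} parameter to be $O(\eps/n)$, which with your $\eta=\tilde O(n^2)$ automata inflates the processor count to $\tilde O(mn^5)$ or worse. The paper's resolution is precisely the "pessimistic estimator" you did not include: it replaces the hard indicator by $\min\{1,|c_i-c_j|/\eps\}\,\phi_i(c_i)\phi_j(c_j)$, which is genuinely Lipschitz in the state with constant $O(1/\eps)$, so each step's confusion is $\tilde O((|v_{it}|+|v_{jt}|)/\eps)$ and Cauchy--Schwarz gives $\mathfrak V=\tilde O(\sqrt n/\eps)$; one must then separately check (via Gaussian anti-concentration) that this damped weight still lower-bounds the cut probability up to $O(\eps)$, and run \textsc{Fool} with parameter $\eps'=\Theta(\eps/\sqrt n)$.

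A second, quantitative issue compounds this: your state count $\eta=\tilde O(n^2)$ per edge is a factor $\sim n$ larger than necessary, because you chose the state grid spacing $\delta=\tilde\Theta(1/n)$ to control worst-case accumulated drift by a triangle inequality. The paper instead rounds each term $v_{ik}r_k$ unbiasedly to a grid of spacing $\gamma=\tilde\Theta(\eps/\sqrt n)$ (so the rounding errors form a mean-zero sum that concentrates at $\tilde O(\sqrt n\gamma)=\tilde O(\eps)$) and then truncates each counter to within $\tilde O(\sqrt n/\eps)$ grid points of its running mean using the machinery of \cref{sec:counter}, yielding $\eta_{ij}=\tilde O(n/\eps^2)$. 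With the corrected error parameter $\eps'=\eps/\sqrt n$, the cost $n\cdot\eta_{ij}\cdot\eps'^{-2}$ per edge is $\tilde O(n^3/\eps^4)$, i.e.\ $\tilde O(mn^3)$ overall --- your headline number is right, but only because your overestimate of $\eta$ and your underestimate of the required $1/\eps'^2$ happen to cancel. Your remaining ingredients (discretized truncated Gaussians with a coupling argument, the FFT-convolution computation of the transition matrices, exhaustive search over the support) do match the paper's approach.
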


For comparison, a \textit{sequential} deterministic $\alpha$-approximation with $\tilde O(n^3)$ runtime was shown in \cite{BK05}. 
This relies on the method of conditional expectation, which is hard to parallelize.  The work of \cite{S02} used the automata derandomization framework to get a parallel deterministic $\alpha(1 - \eps)$-approximation algorithm.  The main downside of this approach is the huge polynomial processor complexity (on the order of $n^{100}$).

The complexity bounds in \Cref{th1,th2} do not depend on fast matrix multiplication; the runtimes are valid even using naive multiplication (with $\omega = 3$).
Although the new algorithms are still not fully work-efficient, this makes progress toward practical deterministic parallel algorithms.

In a broader perspective, our construction differs from classical derandomization via (sequential) PRGs. It is known that, by connections to Boolean circuits \cite{Bor77}, a PRG for log space-bounded computation can be simulated in PRAM in polylog time with a polynomial number of processors. However, relying on the equivalence between PRAM and Boolean circuits leads to very high processors complexities (e.g., Nisan's log space-bounded PRG uses at least $\Omega(n^{45})$ processors~\cite{N90}). Therefore,  known PRGs constructions would either require high polynomials (if logspace bounded) or super-polylogarithmic runtime.  While our automata framework does not provide a PRG, it seems better suited to efficient parallel derandomization. In addition, our construction returns a distribution with seed length $O( \log 1/\eps + \log \log n \eta \ell)$, which is in line with known PRGs.

\subsection{Technical approach}

Our work follows the same general outline as previous algorithms for fooling automata, built from two main subroutines \textsc{Reduce} and \textsc{Fool}. 

This \textsc{Reduce} subroutine takes a distribution $E = D^1 \times D^2$ over automata walks and outputs a refined distribution $D$ with significantly smaller support while maintaining a weighted measure of closeness to $E$.  This leverages the connection between automata and lattice approximation noted by \cite{MRS01}, and applies the work-efficient lattice approximation algorithm of \cite{GG23}. Our key approach to reduce the processor complexity is to sparsify $E$ --- without materializing it --- in $\log_2 |E|$ steps. 

The \textsc{Fool} algorithm (\cref{sec-fool})  recursively invokes \textsc{Reduce} (\cref{sec-reduce}) to construct a fooling distribution, beginning with single-step automaton transitions and progressively scaling up to multi-step distributions. After $\log_2 n$ levels, the final distribution fools $n$-step walks in the automata.

 This resembles the PRG of \cite{INW}, which replaces the concatenation of two independent random walks with correlated random walks.
 
\paragraph{Error tracking} Prior algorithms \cite{H19, MRS01} considered a notion of unweighted absolute error, where the goal was to fool all pairs of $(\text{start, } \text{end})$ states. We replace this with an \emph{aggregated} notion: the weight corresponds to the value of the associated final outcome (e.g., the final count for a counter automaton). The value of an intermediate state is then the expected weight of the final state under a random suffix. This is similar to a scheme for pseudorandom generators and branching programs \cite{BRRY,CHLTW23}. The processor complexity is determined by the \textit{Lipschitz value} of each state: how much the expected final weight changes with a single step of the automaton. 

\paragraph{Further optimizations} In both our applications, the relevant automata are counters to accumulate the running sum of certain statistics. Instead of tracking these exactly, we reduce the state space by truncating the sums to within a few standard deviations of their means. Doing this while also preserving relevant Lipschitz properties is technically challenging and requires significant analysis. This optimization will be discussed in \cref{sec:counter}.

The approximate MAX-CUT application (\cref{sec:max-cut}) requires a few additional optimizations. In particular, to simulate the SDP rounding procedure efficiently, we (i) use discretized truncated Gaussians and quantized counters, (ii) use a pessimistic estimator with a better Lipschitz constant, and (iii) compute the transition matrices via FFT's to exploit their convolution structure.  
We believe that these optimizations may generalize to other rounding procedures and other settings.

\section{Preliminaries}\label{sec:prel}
For an input of size $N$, our goal is to develop deterministic PRAM algorithms with $\poly(N)$ processor complexity and $\polylog(N)$ runtime.  \textbf{Unless stated otherwise, we assume that all relevant algorithms run in \emph{deterministic polylogarithmic time} as a function of their input complexities}. Since we do not track runtime precisely, we overlook some of the model details (e.g. whether it uses shared or exclusive write access).

For processor complexity and other parameters, we use $\tilde O$ notation: we say that $f(x) \leq \tilde O(g(x))$ if $f(x) \leq g(x) \cdot \polylog(N,g(x))$ processors, where $N$ is the size of all algorithm inputs. 

Throughout, we use $\log$ for logarithm in base $e$ and $\lg$ for base $2$. We always assume $\eps \in (0,\tfrac{1}{2})$. 

We write $\lceil x \rfloor$ for rounding to the nearest integer.

\subsection{Basic definitions for automata}
The underlying probability space $\Omega$ is defined by $n$ independent random variables $R_0, \dots, R_{n-1}$, where each $R_t$ has a distribution $\Omega_t$ over an arbitrary alphabet (which we also denote by $\Omega_t$). We consider an automaton $F$ with state spaces $S_t: t = 0, \dots, n$. At each timestep $t = 0, 1, \dots, n-1$, the automaton in state $s \in S_t$ receives an input $R_t$ and transitions to state $F(s,R_t) \in S_{t+1}$.  We call the states in $S_0$ the \emph{starting states}. 

For times $t, t'$ with $t' > t$,  we define $\Omega_{t,t'} = \Omega_t \times \Omega_{t+1} \times \dots \times \Omega_{t' - 1}$. In this context, we call $h = t' - t$ the \emph{horizon}. We refer to a vector $\vec r = (r_t, r_{t+1}, \dots, r_{t'-1} ) \in \Omega_{t,t'}$ as a \emph{drivestream}. We define $F(s,\vec r)$ to be the result of transiting from time $t$ to $t'$ under a given drivestream $\vec r$. 

For a distribution $D$ on drivestreams, we denote the \emph{transition matrix} as $T_D^F$, i.e. $T^F_D(s,s')$ is the probability of transiting from state $s$ to $s'$ for a drivestream $\vec R \sim D$. For brevity, we also write $T^F_{t,t'} := T^F_{\Omega_{t,t'}}$ and for any weight function $w: S \rightarrow \mathbb R$ we define
$$
T_D^F(s,w) = \sum_{s' \in S_t} T_D^F(s,s') w(s').
$$

Throughout, we define  $\eta = \max_t |S_t|$ and $\sigma = \max_t |\Omega_t|$, where $|\Omega_t|$ is the size of the alphabet of $\Omega_t$. We call $\eta$ the \emph{statesize} of $F$ and $\sigma$ the \emph{alphabetsize}. 

\paragraph{Measurement of distribution error} Our goal is to find a polynomial-size distribution $D$ on drivestreams that ``fools'' the automaton. That is, the behavior of the automaton given a drivestream $\vec R \sim D$ should be similar to its behavior when given a drivestream $\vec R \sim \Omega$. 

We follow a strategy of \cite{BRRY} and measure error via a weight function $W: S_n \rightarrow \mathbb R$ over final states, keeping track of ``smoothness" of $W$ as a function of the drivestream values. 

\begin{definition}[Confusion and Total Variability]
The \emph{confusion} at state $s \in S_t$ is defined by 
$$
{\mathfrak C}^F(s) = \max_{r_t, r_t' \in \Omega_t} \E_{R_{t+1}, \dots, R_n \sim \Omega_{t,n}} \Bigl[ \bigl| W( F(s, (r_t, R_{t+1}, \dots, R_n)) - W( F( s,(r_t', R_{t+1}, \dots, R_n))) \bigr| \Bigr]
$$
i.e., the maximum \emph{expected change} in the  final state weight due to changing drivestream entry $t$.

The \emph{total variability} ${\mathfrak V}^F(s)$ of a starting state $s$ is defined by
\begin{align*}
 {\mathfrak V}^F(s) &:= \sum_{h=1}^{n} \max_{\vec r \in \Omega_{0,h}} \mathfrak C^F(   F( s, \vec r) )
 \end{align*}
\end{definition}

Note that $\mathfrak C^F(s)$ is upper-bounded by the \emph{Lipschitz value} of $W(F(s,\vec r))$ as a function of $r_t$, i.e. the maximum change in the weight of the final state of due to changing coordinate $r_t$.

With this rather technical definition, we can state our algorithm concretely:
\begin{theorem}
\label{th:main1}
Our parallel automata-fooling algorithm (\cref{alg-fool}) takes input parameter $\eps > 0$, and produces a distribution $D$ of size $ \eps^{-2} \polylog(n, \eta, \sigma, 1/\eps) $ with
$$
|T_D^F(s,W) - T_{\Omega}^F(s,W)| \leq \eps \thinspace {\mathfrak V}^F( s )  \qquad \text{for all states $s \in S_0$.}
$$

It uses $\tilde O \bigl(  n \eta / \eps^2 + n \eta \sigma \bigr)$ processors, plus the cost of computing the expectations $T^F_{t,n}(s,W)$ for states $s \in S_t$ (more details on this step to be provided later).
\end{theorem}

When $F$ is clear, we write just $T^D, \mathfrak C, \mathfrak V$ etc.

\subsection{Comparison with other automaton models}\label{sec:comp_aut}
Our automaton model is slightly more general compared to previous algorithms. We highlight a few key differences and how to handle them in our framework.

\paragraph{$\bullet$ Nonuniform probability spaces} Previous constructions (e.g. \cite{MRS01, S02}) required the underlying probability space $\Omega$ to be  the uniform distribution on $\{0,1 \}^n$. This can be used to simulate other input distributions, at the cost of problem-specific error analysis or other computational overheads. Our framework, with arbitrary distributions $\Omega_t$, can simplify these constructions. For example, in our application to MAX-CUT (\Cref{sec:max-cut}), we can directly deal with Gaussian random variables vs. simulating them through Binomial distributions as in \cite{S02}.

\paragraph{$\bullet$ Multiple automata} Typical applications have multiple statistical tests to run on the data, and we want a distribution fooling them all simultaneously. Concretely, suppose that for each $i = 1, \dots, \ell$ there is a separate automaton $F_i$ on states $S_{i,0}, \dots, S_{i,n}$ of statesize $\eta_i$;  crucially, all these automata  read the data in the same order $R_0, \dots, R_{n-1}$.  We handle this by encoding them all into a single large automaton, with state space $S_t = \{ (i,s): s \in S_{i,t} \}$ and statesize $\eta = \sum_i \eta_i$. Note that states $(i,s)$ will never mix with states $(i',s')$ for $i' \neq i$.  

\paragraph{$\bullet$ Leveled automata} Previous construction sometimes use additional clock inputs; that is, the automaton receives input $(R_t,t)$ for each time $t$. This is sometimes called a ``leveled automaton". Our use of distinct state spaces $S_t$ for each time $t$  makes this unnecessary: when receiving the input $s \in S_t$, the automaton will automatically learn the time $t$.

\section{Overview of algorithms}\label{sec:aut}
The subroutine \textsc{Reduce} (\cref{sec-reduce}) is the technical core of the automata-fooling construction: given an input distribution $E$ over drivestreams and a weight function $w$, it returns a distribution $D$ which is close to $E$ (measured in terms of $w$), but which has much smaller support. (Note that $w$ will not be the given weight function over final states $W$.) Importantly, through the use of appropriate data structures, the cost of \textsc{Reduce} may be significantly less than the total size of $E$ itself. Later, the final algorithm \textsc{Fool} (\cref{sec-fool}) will be defined by repeated calls to \textsc{Reduce} over successively larger time-horizons. 

 \subsection{Data structures for distributions}
 
Concretely, we store each distribution $D$ as an array of drivestreams $D[0], \ldots, D[\ell-1] \in \Omega_{t,t+h}$ with associated probabilities $p_D(0), \dots, p_D(\ell-1)$. Here $\ell = |D|$ is the \emph{size} of $D$. By adding dummy zero entries as needed, we always assume that $|D|$ is a power of two.

For a bitstring $b$ of length at most $\lg|D|$,  we denote by $D[b]$ the induced distribution consisting of all the \drivestreams\ in $D$ whose indices start with $b$. So $D[b] = D[b \# 0] \cup D[b \# 1]$, where $\#$ denotes bit concatenation  Similarly, we define $p_{D}(b) = \sum_{a \in D[b]} p_D(a)$.

\begin{observation}
\label{step-obs}
Given a drivestream $\vec r \in \Omega_{t,t+h}$, we can compute all values $F(s,\vec r): s \in S_t$ using a total of $\tilde O(\eta h)$ processors.
\end{observation}
\begin{proof}
We do it recursively: we compute the values for times $t$ and $t + h/2$ with drivestreams $\vec r^1 = (r_t, \dots, r_{t+h/2-1}), \vec r^2 = (r_{t+h/2}, \dots, r_{t+h-1})$. Then, we compute $F(s,\vec r) = F( F(s,\vec r^1), \vec r^2 )$ in parallel for all $s$. The result follows by recalling $|S_i| \leq \eta$ for all $i$.
\end{proof}

We use the following key task for our algorithm.
\begin{definition}[Prediction Problem]
Given a distribution $D \subseteq \Omega_{t,t+h}$ and a weight function $w: S_{t+h} \rightarrow \mathbb R$,  we need to produce a data structure $\mathcal Q(D,w)$ to answer the following two types of queries: (i) given any bitstring $b$, return the value $p_D(b)$; (ii) given $(b,s)$ where $b$ is a bitstring and $s$ is a state in $S_t$, return the value $T_{D[b]}(s,w)$. 

Each query should take polylogarithmic time and  processors. In either case, $b$ can take on any length $\ell \leq \lg|D|$. 
\end{definition}

\begin{observation}
\label{pred-prob-gen}
The Prediction Problem for any distribution $D$ of horizon $h$ and any weight function $w$ can be solved with $\tilde O( |D|  h \eta)$ processors.
\end{observation}
\begin{proof}
To support query of type (i), for each bitstring $b$, we can look up all the corresponding distribution entries $p_D(a)$ with $a$ a suffix of $b$. Since each distribution entry appears in $\lg|D|$ bitstrings, this computation takes $\tilde O(|D|)$ processors. To support queries of type (ii), for each entry $\vec r = D[a]$ and every state $s \in S_t$, we use \Cref{step-obs} to determine $s' = F(s,\vec r)$ and add the contribution $p_{D[b]}(a)w(s')$ for every prefix $b$ of $a$.
\end{proof}

The most important case is for a Cartesian product. This result will be used later in \cref{sec-fool}.

\begin{proposition}\label{pred-prob}
Given distributions $D^1 \subseteq \Omega_{t,t+h}$ and $D^2 \subseteq \Omega_{t+h,t+2 h}$, the Prediction Problem for the Cartesian product distribution $E = D^1 \times D^2$ and any weight function $w$  can be solved with $ \tilde O( \eta h ( |D^1| + |D^2| )) $ processors (without ever explicitly materializing $E$).
\end{proposition}
\begin{proof}
First, use \Cref{step-obs} to compute the values $F(s,\vec r)$ for all states $s$ and drivestreams $\vec r \in D^1$. Then, use \Cref{pred-prob-gen} to solve the Prediction Problem for $D^2, w$. Using $\mathcal Q(D^2,w)$,  compute the weight function
$
w_2(s) = T_{D^2}(s,w).
$
and finally apply \Cref{pred-prob-gen} a second time to solve the Prediction Problem for $D^1, w_2$.

Let $L_1 = \lg|D^1|$. Now consider a bitstring $b$ of some length $\ell$. If $\ell \leq L_1$, then
$$
p_{E}(b) = p_{D^1}(b), \qquad T_{E[b]}(s,w) = T_{D^1[b]}(s,w_2)
$$
for any state $s$; such queries can be answered using the data structure $\mathcal Q(D^1, w_2)$. Otherwise, if $\ell > L_1$, then let $b = (b_1, b_2)$ where $b_1$ has length $L_1$ and $b_2$ has length $\ell - L_1$. We have
$$
p_{E}(b) = p_{D^1}(b_1) \cdot p_{D^2}(b_2),\qquad T_{E[b]}(s, w) = T_{D^2[b_2 ]}( F(s,D^1[b_1]) , w) 
$$
which can be computed by using the data structure $\mathcal Q(D^2, w)$ and the values $F(s,\vec r): \vec r \in D^1$.
\end{proof}

\subsection{Lattice approximation}
The problem of automata fooling is closely linked to lattice approximation, defined as follows:
\begin{definition}[Lattice Approximation Problem (LAP)]
    \label{def-LAP}
    Given an $m \times n$ matrix $A$ and a fractional vector $\vec u \in [0,1]^n$, the objective is to compute an integral vector $\vec v \in \{0,1 \}^n$ to minimize the discrepancies $D_k = |\sum_{j=1}^{n} A_{kj}(u_j - v_j)|$.
\end{definition}
Intuitively, this models the process of ``rounding" each random bit (represented by $u_j$) to a static zero or one (represented by $v_j$). The work \cite{GG23} provides a parallel algorithm with near-optimal discrepancy as well as complexity; we summarize it as follows:
\begin{theorem}[Theorem 1.3 of \cite{GG23}]
\label{thm-lattice}
Suppose that $A_{kj} \in [0,1]$ for all $k,j$. 
    There is a deterministic parallel algorithm for the LAP with $D_k \leq O(\sqrt{\mu_k \log m} + \log m)$ for all $k$, where $\mu_k = \sum_{j=1}^n A_{kj}u_j$. The algorithm uses $\tilde O(n + m + nnz(A))$ processors, where $nnz(A)$ denotes the number of non-zero entries in the matrix $A$.
\end{theorem}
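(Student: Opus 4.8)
The plan is to reduce the general LAP to a sequence of $O(\log(mn))$ ``bit-rounding'' subproblems and to solve each subproblem by a parallel derandomization of independent $\pm 1$ rounding. First I would truncate each $u_j$ to its top $b = \Theta(\log(mn))$ bits; since $A_{kj} \le 1$, this perturbs every $\mu_k$ and every discrepancy $D_k$ by at most $n 2^{-b} = O(1)$, so it is harmless. Writing the truncated $u_j$ in binary, I process bit positions from least to most significant, maintaining the invariant that before round $t$ the current vector $\vec u^{(t)}$ has all coordinates equal to integer multiples of $2^{-t}$. Round $t$ must then round every coordinate whose value is an \emph{odd} multiple of $2^{-t}$ up or down by $2^{-t}$; letting $J_t$ be this set of coordinates, the task is to pick signs $\epsilon_j \in \{-1,+1\}$ for $j \in J_t$ keeping each one-round discrepancy $\bigl| \sum_{j \in J_t} \epsilon_j A_{kj} \bigr|$ small, and to set $u_j^{(t-1)} = u_j^{(t)} + 2^{-t}\epsilon_j$.

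For the analysis I would observe that drawing the $\epsilon_j$ i.i.d.\ uniform makes $X_k = \sum_{j \in J_t} \epsilon_j A_{kj}$ a zero-mean sum with $\sum_{j \in J_t} A_{kj}^2 \le \sum_{j \in J_t} A_{kj} =: V_k^{(t)}$ (using $A_{kj} \in [0,1]$), so a Bernstein/Hoeffding bound gives $\Pr\bigl[ |X_k| > c\sqrt{V_k^{(t)} \log m} + c \log m \bigr] < 1/m^2$, and a union bound over the $m$ constraints shows a good sign vector exists. The key accounting point is that a coordinate in $J_t$ has $u_j^{(t)} \ge 2^{-t}$, hence $V_k^{(t)} = \sum_{j \in J_t} A_{kj} \le 2^t \sum_{j \in J_t} A_{kj} u_j^{(t)} \le 2^t \mu_k^{(t)}$, where the running value $\mu_k^{(t)} = \sum_j A_{kj} u_j^{(t)}$ stays within a constant factor of $\mu_k + \log m$ by an easy induction using the discrepancy bound itself. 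So round $t$ contributes to $D_k$ at most $2^{-t}\bigl( c\sqrt{V_k^{(t)}\log m} + c\log m\bigr) \le O\bigl( 2^{-t/2}\sqrt{(\mu_k + \log m) \log m}\bigr) + O\bigl(2^{-t}\log m\bigr)$, and summing the geometric series over $t$ together with the $O(1)$ truncation error yields $D_k \le O(\sqrt{\mu_k \log m} + \log m)$.

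It remains to solve each one-round $\pm 1$ subproblem \emph{deterministically, in polylog depth, and in near-linear work} — this is the main obstacle, since the naive method of conditional expectations fixes one $\epsilon_j$ at a time and takes $\Omega(|J_t|)$ depth. The approach I would take is to use the exponential pessimistic estimator $\Phi = \sum_k \bigl( e^{\lambda_k X_k} + e^{-\lambda_k X_k}\bigr) e^{-\lambda_k^2 V_k^{(t)}/2}$ with per-constraint parameters $\lambda_k$ tuned to the target deviation, but to fix the random bits in parallel rather than one at a time: either (i) replace the fully independent distribution by an $O(\log(mn))$-bit limited-independence or small-bias family under which the moment bound still holds up to constants, then derandomize its $O(\log(mn))$ seed bits one at a time — each conditional expectation of $\Phi$ being a sum over $k$ of quantities computable in $\tilde O(nnz(A))$ total work; or (ii) use a divide-and-conquer scheme that at each of $O(\log|J_t|)$ levels only decides relative signs within pairs of coordinates, tracking the accumulated increase of $\Phi$ level by level. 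Either way one round costs $\tilde O(n + m + nnz(A))$ processors and polylog time; since the supports $J_t$ are disjoint, $\sum_t nnz(A|_{J_t}) \le nnz(A)$, so all $O(\log(mn))$ rounds together still cost $\tilde O(n + m + nnz(A))$, giving the stated bounds. I expect the delicate parts to be (a) keeping the pessimistic estimator simultaneously variance-sensitive (bound depending on $V_k^{(t)}$, not $|J_t|$) and cheap to update in parallel, and (b) verifying that the limited-independence or doubling substitution loses only constant factors in the exponent.
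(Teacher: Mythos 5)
This statement is not proved in the paper at all: it is Theorem~1.3 of \cite{GG23}, imported as a black box (the paper only proves the easy corollary, \Cref{thm-lattice2}, by splitting each row into its positive and negative parts and rescaling). So there is no in-paper proof to compare against; what you have written is an attempt to reprove the cited result.

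Your outer reduction is the standard and correct skeleton: truncate $u$ to $\Theta(\log(mn))$ bits, round bit levels from least to most significant, bound each level's contribution by a variance-sensitive concentration bound with $V_k^{(t)} \leq 2^{t}\mu_k^{(t)}$, and sum the geometric series $\sum_t 2^{-t/2}$. That accounting (including the observation that $j \in J_t$ forces $u_j^{(t)} \geq 2^{-t}$, and the induction keeping $\mu_k^{(t)} = O(\mu_k + \log m)$) is sound and is essentially how \cite{MRS01} and \cite{GG23} organize the problem. The genuine gap is that this reduction lands you exactly on the hard subproblem --- deterministically choosing one round of $\pm 1$ signs with discrepancy $O(\sqrt{V_k \log m} + \log m)$ in polylogarithmic depth and $\tilde O(n+m+nnz(A))$ work --- and your treatment of it is a disjunction of two unverified candidates. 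Option (i) as stated does not obviously work: to certify a deviation of $\sqrt{V_k \log m}$ via limited independence you need roughly $\Theta(\log m)$-wise independence, and evaluating conditional expectations of degree-$\Theta(\log m)$ moment polynomials (or of the exponential estimator restricted to such a family) is not a near-linear-work computation without substantial additional machinery; this is precisely the obstruction that made earlier parallel lattice-approximation algorithms work-inefficient. Option (ii), the pairing/divide-and-conquer scheme that fixes only relative signs within pairs at each of $O(\log n)$ levels while controlling the multiplicative growth of the pessimistic estimator, is in fact the core idea of \cite{GG23}, but you have named it rather than established it: the delicate point is showing the estimator increases by only a $(1+o(1/\log n))$-type factor per level under a cheaply-searchable seed, and that is the entire technical content of the theorem. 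As written, the proposal reduces the claim to its hardest ingredient and leaves that ingredient unproved.
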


For our purposes, it will be convenient to use a slight extension of \Cref{thm-lattice} which allows for the discrepancy matrix $A$ to take arbitrary values.
\begin{proposition}
\label{thm-lattice2}
Let $\Delta_k = \max_j |A_{kj}|$ for each row $k$. 
    There is a deterministic parallel algorithm  for the LAP where $D_k \leq O( \sqrt{\Delta_k \tilde \mu_k \log m} + \Delta_k \log m) \leq O( \Delta_k (\sqrt{n \log m} + \log m))$ 
    for all $k$, where $\tilde \mu_k = \sum_{j=1}^n |A_{kj}| u_j$. The algorithm uses $\tilde O(n + m + nnz(A))$ processors.
\end{proposition}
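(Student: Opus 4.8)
The plan is to reduce to \Cref{thm-lattice} by a rescaling-and-splitting argument, with no appeal to anything beyond that theorem. First I would discard every row $k$ with $\Delta_k = 0$: for such a row all entries $A_{kj}$ vanish, so $D_k = 0$ for every choice of $\vec v$, and the row may be deleted without affecting anything. For each remaining row, split each entry into positive and negative parts, $A_{kj} = A^+_{kj} - A^-_{kj}$ with $A^+_{kj} = \max(A_{kj},0)$ and $A^-_{kj} = \max(-A_{kj},0)$, and normalize by $\Delta_k$. Concretely, form a new $2m \times n$ matrix $B$ whose rows are $B^+_{kj} := A^+_{kj}/\Delta_k$ and $B^-_{kj} := A^-_{kj}/\Delta_k$ for each surviving original row $k$. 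Since $0 \le A^{\pm}_{kj} \le |A_{kj}| \le \Delta_k$, every entry of $B$ lies in $[0,1]$, as required by \Cref{thm-lattice}; moreover each nonzero entry of $A$ contributes to exactly one of $B^+_{kj}, B^-_{kj}$, so $nnz(B) = nnz(A)$.

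Next I would run the algorithm of \Cref{thm-lattice} on $B$ with the same fractional vector $\vec u$. It returns a single integral vector $\vec v \in \{0,1\}^n$ such that, for each original row $k$,
$$\Bigl|\sum_j B^{\pm}_{kj}(u_j - v_j)\Bigr| \le O\bigl(\sqrt{\mu^{\pm}_k \log(2m)} + \log(2m)\bigr), \qquad \mu^{\pm}_k := \sum_j B^{\pm}_{kj} u_j = \tilde\mu^{\pm}_k/\Delta_k,$$
where $\tilde\mu^{\pm}_k := \sum_j A^{\pm}_{kj} u_j$. Multiplying through by $\Delta_k$, using $\Delta_k\sqrt{\mu^{\pm}_k} = \sqrt{\Delta_k\tilde\mu^{\pm}_k}$ and $\log(2m) = O(\log m)$, this gives $\bigl|\sum_j A^{\pm}_{kj}(u_j - v_j)\bigr| \le O(\sqrt{\Delta_k \tilde\mu^{\pm}_k \log m} + \Delta_k \log m)$. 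Crucially, the same $\vec v$ serves all the doubled rows by construction, so it is a valid output for the original LAP.

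Finally I would combine the two halves by the triangle inequality, $D_k = \bigl|\sum_j A_{kj}(u_j - v_j)\bigr| \le \bigl|\sum_j A^+_{kj}(u_j - v_j)\bigr| + \bigl|\sum_j A^-_{kj}(u_j - v_j)\bigr|$, and then apply $\sqrt{a} + \sqrt{b} \le \sqrt{2(a+b)}$ together with $\tilde\mu^+_k + \tilde\mu^-_k = \sum_j |A_{kj}| u_j = \tilde\mu_k$ to obtain $D_k \le O(\sqrt{\Delta_k \tilde\mu_k \log m} + \Delta_k \log m)$. The coarser bound $D_k \le O(\Delta_k(\sqrt{n\log m} + \log m))$ then follows from $\tilde\mu_k \le \Delta_k \sum_j u_j \le \Delta_k n$. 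For the processor count, $B$ has $2m$ rows, $n$ columns and $nnz(B) = nnz(A)$ nonzeros, so \Cref{thm-lattice} costs $\tilde O(n + 2m + nnz(A)) = \tilde O(n + m + nnz(A))$, and the pre- and post-processing (computing the $\Delta_k$, splitting, rescaling, recombining the discrepancies) is a constant number of $\tilde O(n + m + nnz(A))$-work parallel steps. There is no genuine obstacle here; the only points needing care are excluding the all-zero rows so that division by $\Delta_k$ is well defined, and observing that the lattice algorithm produces one common rounding $\vec v$ for the doubled row set — which it does.
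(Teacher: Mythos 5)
Your proposal is correct and is essentially the paper's own proof: both construct the same $2m \times n$ matrix by splitting each row into its positive and negative parts normalized by $\Delta_k$, and then apply \Cref{thm-lattice}. You have simply spelled out the details (handling $\Delta_k = 0$, recombining the two discrepancies via the triangle inequality and $\sqrt{a}+\sqrt{b} \le \sqrt{2(a+b)}$) that the paper leaves implicit.
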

\begin{proof}
Construct a $2m \times n$ matrix $\tilde A$, where for each row $k = 0, \dots, m-1$ of  $A$, the matrix $\tilde A$ has $\tilde A_{2k, j} = \max\{0, \frac{A_{kj}}{\Delta_k} \}$ and 
$\tilde A_{2k+1,j} = \max\{0, \frac{-A_{kj}}{\Delta_k}\}$. Then apply \Cref{thm-lattice} to $\tilde A$.
\end{proof}

\section{The REDUCE Algorithm}\label{sec-reduce} 
To reiterate, the goal of \textsc{Reduce} (\cref{alg-reduce1}) is to take an input distribution $E \subseteq \Omega_{t,t+h}$ and produce a smaller approximating distribution $D$.  There is a simple randomized procedure for this task: draw $m$ elements $\vec r^1, \dots, \vec r^m$ independently with replacement from $E$, and set $D$ to be the uniform distribution over these drivestreams.  The  \textsc{Reduce} algorithm derandomizes this via a slowed-down simulation, iteratively computing distributions $D_0 = E, D_1, \ldots, D_{\ell} = D$ by fixing the $i^{\text{th}}$ bit level of each entry in $D_{i-1}$.

\begin{algorithm}[H]
    \caption{\textsc{Reduce}($E,  \eps,  w$)}
    \label{alg-reduce1}
    \centering
    \begin{algorithmic}[1]
    \State Set $\ell = \lg|E|$ and $m = \frac{C  \ell^2 \log \eta}{\eps^2}$ for a constant $C$.
    \State Solve Prediction Problem for distribution $E$    

    \medskip

    \State Initialize the multiset $H_0 := \{ m$ empty bitstrings\} 
    \For{$i = 0, \ldots, \ell - 1$}\Comment{Fix $i^{\text{th}}$ bit}

        \State Formulate LAP $\mathcal{L}$ for $H_{i}$.
        \State $\vec v \leftarrow $ solve $\mathcal{L}$ via \cref{thm-lattice2}  with sampling rate $u_b = \frac{p_E(b \# 1)}{p_E(b)}$ for each $b \in H_{i}$
        \State $H_{i+1} \leftarrow \{ b \# v_b : b \in H_{i} \}$ \Comment{ Concatenate vector $\vec v$ as $i^{\text{th}}$ bit level}
   
    \EndFor
    \medskip
    \ParFor{$j \in \{0, \dots, m-1 \} $}
    \Comment{Convert bitstring (index) to drivestream (value)}
    \State Set $D[j] = E[ H_\ell[j] ]$ and set probability $p_D[j] = 1/m$.
    \EndParFor
    \Return $D$ \textcolor{white}{$\left\{D^1\right\}$}
    \end{algorithmic}
\end{algorithm}

To measure distribution error, we introduce the following key definition:

\begin{definition}[Spread]
For $w: S_{t+h} \rightarrow \mathbb R$,  the \emph{spread $\mathfrak S^F_E(s,w)$} of state $s \in S_t$ is:
$$
\mathfrak S^F_E(s,w) :=  \max_{\vec r^1 \in E} w(F(s,\vec r^1)) - \min_{\vec r^2 \in E} w(F(s,\vec r^2))
$$

We write $\mathfrak S^F_{t,t+h}$ as shorthand for $\mathfrak S^F_{\Omega_{t,t+h}}$.
\end{definition}

The following is our main result analyzing the algorithm:
\begin{theorem}
\label{thm-reduce-additive}
Algorithm \textsc{Reduce} runs in $\tilde O(\frac{h + \eta}{\eps^2})$ processors, plus the cost of solving the Prediction Problem for $E$. The final distribution $D$ is uniform with size $O( \frac{ \log \eta \log^2 |E| }{\eps^2})$, and it satisfies
$$
| T_D^F(s,w) - T_E^F(s,w) | \leq \eps \thinspace \mathfrak S^F_E(s,w) \qquad \text{for each state $s \in S_t$}
$$
\end{theorem}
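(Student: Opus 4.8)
The plan is to analyze the iterative bit-fixing process and show that the total error accumulated across the $\ell = \lg|E|$ rounds is bounded by $\eps \alpha(s,w,E)$ for every state $s$. I would set up the argument as a martingale-style telescoping bound: write $T_{H_i}(s,w)$ for the (expected) weight quantity associated with the intermediate distribution $H_i$ — more precisely, $H_i$ is a multiset of length-$i$ bitstrings, each of which is a prefix selecting a sub-distribution $E[b*]$, so the natural quantity is $\frac{1}{m}\sum_{b \in H_i} T_{E[b*]}(s,w)$, which equals $T_{H_0}(s,w) = T_E(s,w)$ in expectation at step $0$ and equals $T_D(s,w)$ at step $\ell$. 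The goal is to bound $|T_{H_\ell}(s,w) - T_{H_0}(s,w)| \le \eps\,\alpha(s,w,E)$.

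The key step is to analyze a single round $i \to i+1$. Fixing the $i$-th bit of each $b \in H_i$ means, for each $b$, replacing the "fractional" choice "go to $b0$ with probability $1 - u_b$, go to $b1$ with probability $u_b$" by a deterministic choice $v_b \in \{0,1\}$, where $u_b = p_E(b1*)/p_E(b*)$. The contribution of block $b$ to $T_{H_i}(s,w)$ is, by the tower property over the next bit, $T_{E[b*]}(s,w) = (1-u_b) T_{E[b0*]}(s,w) + u_b T_{E[b1*]}(s,w)$, and after rounding it becomes $T_{E[b v_b *]}(s,w)$. So the per-round change is $\frac{1}{m}\sum_{b \in H_i}(u_b - v_b)\bigl(T_{E[b1*]}(s,w) - T_{E[b0*]}(s,w)\bigr)$. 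This is exactly a lattice-discrepancy expression: set $A_{s,b} = \frac{1}{m}\bigl(T_{E[b1*]}(s,w) - T_{E[b0*]}(s,w)\bigr)$, with one row per state $s$ and one column per $b \in H_i$; then the round-$i$ error for state $s$ is $D_s = |\sum_b A_{s,b}(u_b - v_b)|$, which is precisely what \Cref{thm-lattice2} controls. The row bound is $\Delta_s = \max_b |A_{s,b}| \le \frac{1}{m}\alpha(s,w,E)$, since $T_{E[b1*]}(s,w)$ and $T_{E[b0*]}(s,w)$ are both convex combinations of values $w(F(\vec r,s))$ over $\vec r$ in sub-distributions of $E$, hence both lie in an interval of length $\alpha(s,w,E)$. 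The number of rows is $|S| = \eta$ (so "$m$" in \Cref{thm-lattice2} is $\eta$), and the number of columns is $|H_i| = m$. Therefore \Cref{thm-lattice2} gives $D_s \le O\bigl(\tfrac{1}{m}\alpha(s,w,E)(\sqrt{m \log \eta} + \log \eta)\bigr)$. Summing over the $\ell$ rounds via the triangle inequality, the total deviation is at most $O\bigl(\ell \cdot \tfrac{1}{m}\alpha(s,w,E)(\sqrt{m\log\eta} + \log\eta)\bigr) = O\bigl(\alpha(s,w,E) \cdot (\ell\sqrt{\log\eta/m} + \ell\log\eta/m)\bigr)$; plugging in $m = C\ell^2\log\eta/\eps^2$ makes the dominant term $O(\eps/\sqrt{C})\cdot\alpha(s,w,E)$, which is $\le \eps\,\alpha(s,w,E)$ for $C$ a large enough constant (the second term is lower order since $m \ge \ell^2\log\eta$).

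For the complexity and size claims: $D$ is uniform of size $m = O(\ell^2 \log\eta/\eps^2) = O(\log^2|E| \log\eta/\eps^2)$ by construction of lines 1 and the final \textbf{parfor}. For processors, each of the $\ell$ rounds builds a LAP with $\eta$ rows, $m$ columns, and $nnz \le \eta m$ nonzeros; obtaining the matrix entries $A_{s,b}$ requires the values $T_{E[b1*]}(s,w)$, $T_{E[b0*]}(s,w)$ and the ratios $p_E(b1*)/p_E(b*)$, which are exactly the outputs of the Prediction Problem data structure $\mathcal{Q}(E,w)$; so after the one-time Prediction-Problem solve, each round costs $\tilde O(\eta + m + \eta m) = \tilde O(\eta m) = \tilde O(\eta/\eps^2)$ by \Cref{thm-lattice2}, over $\ell = \polylog$ rounds. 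The final conversion of bitstrings to drivestreams (line: $D[j] = E[H_\ell[j]]$) needs, for each of the $m$ entries, to read off the drivestream of horizon $h$, costing $\tilde O(mh) = \tilde O(h/\eps^2)$ — actually we only materialize the $m$ selected drivestreams, not all of $E$. Altogether this is $\tilde O((h+\eta)/\eps^2)$ plus the Prediction-Problem cost, as claimed. I would be slightly careful to state that the Prediction Problem queries are asked only for prefixes $b$ that actually occur in some $H_i$, so only $O(\ell m)$ queries total, each polylogarithmic.

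The main obstacle I anticipate is the bookkeeping in the single-round analysis: correctly identifying that the per-round increment is a clean linear functional $\sum_b A_{s,b}(u_b - v_b)$ of the rounding vector — this relies on $H_i$ being a multiset where distinct occurrences of the same prefix $b$ are treated as separate columns, and on the tower/convexity identity $T_{E[b*]}(s,w) = (1-u_b)T_{E[b0*]}(s,w) + u_b T_{E[b1*]}(s,w)$ holding with the sampling rate defined exactly as $u_b = p_E(b1*)/p_E(b*)$ (with the convention that $u_b$ is irrelevant, say $0$, when $p_E(b*) = 0$, i.e., for dummy entries). The other delicate point is the bound $\Delta_s \le \alpha(s,w,E)/m$: one must observe that $\alpha$ is defined over all of $E$, and $E[b0*], E[b1*]$ are sub-distributions, so every value $T_{E[bc*]}(s,w)$ lies within a window of width $\alpha(s,w,E)$, making their difference bounded by $\alpha(s,w,E)$ in absolute value. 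Everything else is the routine union-of-triangle-inequalities over $\ell$ rounds and substitution of $m$.
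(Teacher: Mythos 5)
Your proposal is correct and follows essentially the same route as the paper's proof: the same telescoping over the $\ell$ bit-fixing rounds, the same identification of the per-round increment as $\frac{1}{m}\sum_b (u_b - v_b)(T_{E[b1*]}(s,w) - T_{E[b0*]}(s,w))$, the same bound $\Delta \leq \alpha(s,w,E)$ fed into \Cref{thm-lattice2} (you fold the $1/m$ into the matrix entries whereas the paper keeps it outside, which is equivalent), and the same complexity accounting via the Prediction Problem data structure. No gaps.
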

\begin{proof}
Let us fix $F, E,w$; for brevity,  we write $\alpha_{s} := \mathfrak S^F_E(s,w)$ for each state $s$.

Each multiset $H_i$ contains $m$ bitstrings $H_{i}[0], \dots, H_{i}[m-1]$ of length $i$. Consider 
the distribution $D_i$ obtained by drawing a bitstring $b \in H_i$ uniformly at random and then drawing a drivestream from $E[b]$. In particular, $D_0 = E$ and the distribution returned by \textsc{Reduce} is precisely $D_\ell = D$. We have the following equation for every state $s$:
\begin{align*}
    T_{D_{i}}(s,w) &= \frac{1}{m} \sum_{b \in H_{i}} T_{E[b]}(s,w).
\end{align*}

Observe that $H_{i+1}$ is obtained by appending a bit $v_b$ to each bitstring $b \in H_i$. 
We expose the choice of the next bit in $D_i$ and $D_{i+1}$ as follows
\begin{align*}
    T_{D_{i}}(s,w) &= \frac{1}{m} \sum_{b \in H_{i}} \left[ \tfrac{p_E(b\#1)}{p_E(b)}  \cdot T_{E[b\#1]}(s,w) + \tfrac{p_E(b\#0)}{p_E(b)} \cdot T_{E[b\#0]}(s,w)\right]\\
        T_{D_{i+1}}(s,w) &= \frac{1}{m} \sum_{b \in H_{i}}  \left[ v_b \cdot T_{E[b\#1]}(s,w) + (1-v_b) \cdot T_{E[b\#0]}(s,w)\right].
\end{align*}
Thus we can calculate the difference between probabilities for $D_i$ and $D_{i+1}$ as:
\begin{align}
\label{frg1}
   T_{D_{i+1}}(s,w) - T_{D_i}(s,w) &= \frac{1}{m} \sum_{b \in H_{i}}  \bigl( \tfrac{p_E(b\#1)}{p_E(b)} - v_b \bigr) \bigl( T_{E[b\#1]}(s,w) - T_{E[b\#0]}(s,w) \bigr).
\end{align}

In light of Eq.~(\ref{frg1}), we apply \Cref{thm-lattice2} where each state $s$ corresponds to a constraint row $k$ with entries $A_{kb} = T_{E[b\#1]}(s,w) - T_{E[b\#0]}(s,w )$, 
and with $u_b = \tfrac{p_E(b\#1)}{p_E(b)}$ for all $b$; note that, after solving the Prediction Problem for $E$, all these values can be generated using $O(\eta m)$ processors. The maximum spread of the values $w(F(s,\vec r))$ over all possible choices of $\vec r$ is at most $\alpha_s$, so
$$
\Delta_k = \max_k |A_{kb}| \leq \alpha_s
$$
Since the matrix has $\eta$ rows and $m$ columns,   \Cref{thm-lattice2} gives:
$$
| T_{D_{i+1}}(s,w) - T_{D_i}(s,w) | \leq O \Bigl( \frac{ \alpha_s \sqrt{ m \log \eta} + \alpha_s \log \eta}{m} \Bigr)
$$

By our choice of $m$, this is at most $\eps \alpha_s/\ell$. Over all iterations, this gives the desired bound \[
 | T_D(s,w) - T_E(s,w)| \leq \sum_{i=0}^{\ell-1} |T_{D_{i+1}}(s,w) - T_{D_{i}}(s,w) | \leq \eps \alpha_s. \qedhere
\]
\end{proof}

\section{The FOOL Algorithm}\label{sec-fool}
We build the automata-fooling distribution via the algorithm $\textsc{Fool}$ (\cref{alg-fool}). 

\begin{algorithm}[H]
\caption{\textsc{Fool}($\eps$) (assume $n$ is a power of two)}
\label{alg-fool}
\centering
\begin{algorithmic}[1]
\State Set $D_{0,t} = \Omega_t $ for each $t = 0, \dots, n-1$
\State Determine approximate transition vectors $\hat V_t(s) \approx T_{t,n}^F (s,W): s \in S_t, t = 0, \dots, n$.
\medskip
\For{$i = 0, \dots, \lg n - 1$}
\ParFor{$t \in \{ 0, 2 h, 4 h, 8 h, n - 2 h \}$ where $h = 2^i$} 
  \State $D_{i+1,t} \leftarrow \textsc{Reduce}(  D_{i,t} \times D_{i,t+h}, \delta, \hat V_{ t + 2 h} )$ for $\delta = \frac{\eps}{20 (1 + \lg n)}$
\EndParFor
\EndFor
\Return final distribution $D = D_{\lg n,0}$ 
\end{algorithmic}
\end{algorithm}

The algorithm first finds the ``expected weight" vectors $V_t$ for each distribution $\Omega_{t, n}$. This may use matrix multiplication, or it may take advantage of  problem-specific structures. It may also have some small error. Then, the main loop fools distributions on time horizons $h = 1, 2, 4, 8, \dots $ in a bottom-up fashion, merging them with the \textsc{Reduce} procedure from the previous section.   Specifically, at each level $i$, it uses \textsc{Reduce} to combine the distributions $D_{i,t}$ and $D_{i,t+h}$ into the joint distribution $D_{i+1,t}$.  We never materialize the Cartesian product $D_{i,t} \times D_{i,t+h}$.

\begin{theorem}
\label{fool-cost}
The cost of \textsc{Fool} is $\tilde O \bigl(  \frac{n \eta}{\eps^2} + n \eta \sigma \bigr)$
processors, plus the cost of computing the vectors $\hat V_t: t=0,\dots, n$. The final distribution $D_{\lg n,0}$ has size $O( \frac{\log^5( n \eta \sigma / \eps)}{\eps^2})$.
\end{theorem}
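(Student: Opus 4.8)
The plan is to analyze the two claims—processor complexity and output size—separately, in both cases by unrolling the recursion over the $\lg n$ levels and summing the per-level costs given by \Cref{thm-reduce-additive}.

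For the \textbf{size bound}, I would track $|D_{i,t}|$ as a function of the level $i$. At level $0$ we have $|D_{0,t}| = |\Omega_t| \le \sigma$. Now \Cref{thm-reduce-additive} tells us that a single call $\textsc{Reduce}(E,\delta,\cdot)$ returns a distribution of size $O(\delta^{-2}\log\eta\,\lg^2|E|)$, \emph{regardless of the exact size of $E$}—only logarithmically dependent on it. So if $|D_{i,t}| \le M_i$, then since $|D_{i,t}\times D_{i,t+h}| \le M_i^2$ we get $M_{i+1} = O(\delta^{-2}\log\eta\cdot \lg^2(M_i^2)) = O(\delta^{-2}\log\eta\cdot \lg^2 M_i)$. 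The recursion $M_{i+1} = O(\delta^{-2}\log\eta\cdot\lg^2 M_i)$ starting from $M_0 \le \sigma$ collapses very fast: after one step $M_1 = O(\delta^{-2}\log\eta\,\lg^2\sigma)$, and thereafter $\lg M_i$ is already polylogarithmic in $n,\eta,\sigma,1/\eps$ (recall $\delta = \eps/(20(1+\lg n))$), so for all $i \ge 1$ we have $M_i = O(\delta^{-2}\log\eta\cdot\polylog(n,\eta,\sigma,1/\eps)) = \eps^{-2}\polylog(n,\eta,\sigma,1/\eps)$. In particular $|D_{\lg n,0}| = \eps^{-2}\polylog(n,\eta,\sigma,1/\eps)$; chasing the exponents on the logs (two log-squarings of $\sigma$, plus the $\log\eta$ and $1/\delta$ factors) gives the stated $O(\log^5(n\eta\sigma/\eps)/\eps^2)$. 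The only mild subtlety is making sure the recursion for $\lg M_i$ really is a contraction to a polylog fixed point rather than blowing up; this is immediate once one observes the double-log, but it is the step to state carefully.

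For the \textbf{processor bound}, the work at level $i$ is dominated by the $n/(2h)$ parallel calls to $\textsc{Reduce}(D_{i,t}\times D_{i,t+h}, \delta, \hat V_{t+2h})$, each on a distribution of horizon $2h$. By \Cref{thm-reduce-additive} one such call costs $\tilde O((2h+\eta)/\delta^2)$ processors plus the cost of the Prediction Problem for $E = D_{i,t}\times D_{i,t+h}$; by \Cref{pred-prob} the latter is $\tilde O(\eta\cdot 2h\cdot(|D_{i,t}|+|D_{i,t+h}|)) = \tilde O(\eta h M_i)$. Summing over the $n/(2h)$ windows at level $i$: the $\textsc{Reduce}$ part contributes $\tilde O\bigl(\tfrac{n}{h}\cdot\tfrac{h+\eta}{\delta^2}\bigr) = \tilde O\bigl(\tfrac{n}{\eps^2} + \tfrac{n\eta}{h\eps^2}\bigr)$, and the Prediction part contributes $\tilde O(\tfrac{n}{h}\cdot\eta h M_i) = \tilde O(n\eta M_i)$. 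Now I treat level $0$ separately, because $M_0 = \sigma$ can be large: at level $0$, $h=1$ and $M_0 \le \sigma$, and the Prediction Problem here is for $\Omega_t\times\Omega_{t+1}$, costing $\tilde O(\eta\sigma)$ in total over all $t$ (this is where the $\eta\sigma$ term comes from); the $\textsc{Reduce}$ part at level $0$ is $\tilde O(n/\eps^2 + n\eta/\eps^2) = \tilde O(n\eta/\eps^2)$. For levels $i \ge 1$ we have $M_i = \eps^{-2}\polylog$, so the Prediction part is $\tilde O(n\eta/\eps^2)$ (absorbing the polylog into $\tilde O$), and the $\textsc{Reduce}$ part summed over $i\ge 1$ is $\sum_i \tilde O(n/\eps^2 + n\eta/(h\eps^2))$, a geometric sum in $h=2^i$ bounded by $\tilde O(n/\eps^2 + n\eta/\eps^2) = \tilde O(n\eta/\eps^2)$. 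Adding the level-$0$ contribution $\tilde O(\eta\sigma)$ gives the claimed total $\tilde O(n\eta/\eps^2 + \eta\sigma)$, with the cost of computing the $\hat V_t$'s left as the stated additive term.

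The \textbf{main obstacle} is bookkeeping rather than a conceptual hurdle: one must be careful that the $|D_{i,t}|$ at intermediate levels are genuinely polylogarithmic (so that the Prediction-Problem cost $\tilde O(\eta h M_i)$ does not dominate), which relies on the fact—built into \Cref{thm-reduce-additive}—that $\textsc{Reduce}$'s output size depends only \emph{doubly-logarithmically} on its input size. The level-$0$ step is the one genuinely large case and must be isolated to produce the $\eta\sigma$ term; everything else telescopes or sums geometrically. A secondary point of care is confirming that the $\delta = \eps/(20(1+\lg n))$ substitution only inflates the $1/\eps^2$ by a $\polylog(n)$ factor, which is harmless under $\tilde O$ and under the $\polylog$ in the size bound.
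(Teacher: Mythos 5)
Your proposal is correct and follows essentially the same route as the paper's proof: an induction showing that all $|D_{i,t}|$ for $i \geq 1$ collapse to $\eps^{-2}\polylog$ (because \textsc{Reduce}'s output size depends only logarithmically on its input size), followed by summing the per-level \textsc{Reduce} and Prediction-Problem costs, with level $0$ isolated to produce the $\eta\sigma$ term. No gaps.
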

\begin{proof} 
Let $N = \max\{\eta, \sigma, n, 1/\eps \}$. We first show by induction on $i$ that $|D_{i,t}| \leq \eps^{-2} \log^5 N$ for all $i \geq 1$ and large enough $N$. For, let $D^1 = D_{i,t}, D^2 = D_{i,t+h}$ and $D^{12} = D_{i+1,t}$. By induction hypothesis (and taking into account the distribution sizes for $i = 0$), distributions $D^1, D^2$ have size at most $\sigma + (\log^5 N)/\eps^2$. So $|D^1 \times D^2| \leq \sigma^2 (\log^{10} N)/\eps^4$. By specification of \textsc{Reduce}, we get $|D^{12}| \leq O(\delta^{-2} \log \eta \cdot \log^2(  \sigma^2 (\log N)^{10}/\eps^4 ))$, which is indeed at most $\eps^{-2} \log^5 N$ for large enough $N$. 

Each call to \textsc{Reduce} at level $i$ has processor complexity $\tilde O ( ( 2^i + \eta) \delta^{-2} )$ by \Cref{thm-reduce-additive}, plus the cost of solving the Prediction Problem on distribution $D_{i,t} \times D_{i,t+h}$. For $i \geq 1$, the latter step has cost $\tilde O( \delta^{-2} \eta 2^i)$.  There are $n/2^i$ calls to \textsc{Reduce} at level $i$, so the overall processor complexity $\frac{n}{2^i} \cdot \tilde O( (2^i\eta) \cdot \delta^{-2} ) = \tilde O(n \eta/\eps^2 )$. Similarly, by \Cref{pred-prob-gen}, the Prediction Problems at level $i = 0$ have total cost $\tilde O(n \eta \sigma)$.
\end{proof}

We now proceed to analyze the error of the final distribution $D$. For purposes of analysis, we define the \emph{exact} transition vector $V_t = T_{t,n}(s,W)$ and its approximation error by 
$$
\beta = \max_{s,t} |V_t(s) - \hat V_t(s)|
$$

\begin{proposition}\label{prop-add-err0s}
For any time $t$, horizon $h$ and state $s \in S_t$, we have
\begin{align*}
\mathfrak S_{t,t+h}(s,  \hat V_{t+h}) &\leq 2 \beta + 2 \sum_{k=t+1}^{t+h} 
\max_{ \vec r \in \Omega_{t,k}}  \mathfrak C(F(s,\vec r))
\end{align*}
\end{proposition}
\begin{proof}
Similar bounds were shown in \cite{BRRY}; for completeness, we show it in \Cref{app44}.
\end{proof}

\begin{theorem}\label{prop-add-err-add}
For any starting state $s$, the final distribution $D$ returned by \textsc{Fool} satisfies
$$
| T_{D}(s,W) - T_{\Omega}(s,W) | \leq \eps \thinspace {\mathfrak V}(s) + 3 \beta n 
$$
\end{theorem}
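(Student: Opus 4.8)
The plan is to telescope the error across the $\lg n$ levels of \textsc{Fool}, charging each level's contribution to the sensitivity of the weight function at that horizon, and then to bound the accumulated sensitivities in terms of $\mathfrak V(s)$ using \Cref{prop-add-err0s}. The subtlety throughout is that \textsc{Reduce} is called with the \emph{approximate} vectors $\hat V_{t+2h}$ rather than the exact $V_{t+2h}$, so \Cref{prop-add-err0s0} must be used repeatedly to translate between the two, each time paying an additive $O(\beta)$.

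First I would set up notation for the intermediate distributions: for each level $i$ and each relevant start time $t$, let $D_{i,t}$ be the distribution on window $(t, 2^i)$ produced by \textsc{Fool}, and compare $T_{D_{i,t}}(s, W)$ against $T_{\Omega_{t,t+2^i}}(s, W)$ — or, more precisely, against the value obtained by running the true product distribution $\Omega$ on that window. The key per-step estimate comes from \Cref{thm-reduce-additive}: the call $D_{i+1,t} \leftarrow \textsc{Reduce}(D_{i,t} \times D_{i,t+h}, \delta, \hat V_{t+2h})$ guarantees
$$
\bigl| T_{D_{i+1,t}}(s, \hat V_{t+2h}) - T_{D_{i,t} \times D_{i,t+h}}(s, \hat V_{t+2h}) \bigr| \leq \delta\, \alpha(s, \hat V_{t+2h}, D_{i,t} \times D_{i,t+h}).
$$
I would convert the right-hand side to the sensitivity with respect to the exact vector $V_{t+2h}$ (losing $2\delta\beta$ via \Cref{prop-add-err0s0}), and then bound $\alpha(s, V_{t+2h}, D_{i,t} \times D_{i,t+h})$ by $\alpha(s, V_{t+2h}, \Omega_{t,t+2h})$ — this monotonicity holds because every drivestream in $D_{i,t} \times D_{i,t+h}$ lies in the support of $\Omega_{t,t+2h}$, so the max-minus-min defining sensitivity can only shrink. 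Then \Cref{prop-add-err0s} expands this into $2\sum_{k=t}^{t+2h-1} \max_{\vec r} \alpha(F(\vec r, s), V_{k+1}, \Omega_k)$, i.e. a sum of single-step confusion-type terms.

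Next I would assemble the telescoping sum. Writing the total error $|T_D(s,W) - T_\Omega(s,W)|$ as a sum over all \textsc{Reduce} calls of the per-call discrepancies (each transported down to the root by the fact that later \textsc{Reduce} steps act on these as sub-distributions, and using that $T_{D_{i,t} \times D_{i,t+h}}(s, \hat V_{t+2h})$ agrees with the "parent" evaluation up to the recursive structure), the dominant term becomes
$$
\sum_{i=0}^{\lg n - 1} \delta \cdot 2 \sum_{\text{calls at level } i} \sum_{k} \max_{\vec r \in \Omega_{0,k}} \alpha(F(\vec r, s), V_{k+1}, \Omega_k).
$$
At each level, the start times $t$ partition $\{0, \dots, n-1\}$ into blocks of length $2h = 2^{i+1}$, so the inner double sum over calls and over $k$ ranges over every $k \in \{0,\dots,n-1\}$ exactly once. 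Each single-step sensitivity $\alpha(F(\vec r, s), V_{k+1}, \Omega_k)$ equals (up to a constant) the confusion $\mathfrak C(F(\vec r, s), k)$ — here I would just check that $\alpha(\cdot, V_{k+1}, \Omega_k) = \max_{r_1,r_2 \in \Omega_k} |V_{k+1}(F(r_1,\cdot)) - V_{k+1}(F(r_2,\cdot))|$ is exactly the quantity inside the definition of $\mathfrak C$, since $V_{k+1} = T_{k+1,n}(\cdot, W)$. So each level contributes at most $2\delta \sum_{k=0}^{n-1} \max_{\vec r \in \Omega_{0,k}} \mathfrak C(F(\vec r, s), k) = 2\delta\, \mathfrak V(s)$, and summing over $\lg n$ levels and plugging $\delta = \frac{\eps}{20(1+\lg n)}$ gives a contribution safely below $\eps \mathfrak V(s)$. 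The residual $O(\beta)$ terms — one from the $\hat V$-vs-$V$ swap at each of the $n-1$ \textsc{Reduce} calls, plus one more from replacing $T_D(s, W)$ by $T_D(s, \hat V_0)$ at the very end — accumulate to at most $3\beta n$ with room to spare.

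The main obstacle I anticipate is bookkeeping the recursive transport cleanly: one must argue that the discrepancy introduced by the level-$i$ \textsc{Reduce} call on window $(t, 2^i)$, measured against $\hat V_{t+2h}$, is exactly the contribution that survives (undistorted) into the final root evaluation $T_D(s, W)$. This works because $D_{i+1,t}$ enters the level-$(i+1)$ call as a factor of a Cartesian product, and $T_{D^1 \times D^2}(s, w) = T_{D^1}(s, T_{D^2}(\cdot, w))$ is linear in $D^1$; composing these identities down the tree shows the errors add without amplification, provided the weight function fed to \textsc{Reduce} at each level is consistent with the exact suffix weight $V$ up to the $\beta$ slack already accounted for. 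Getting this composition and the consistency of weight functions across levels exactly right — rather than the individual inequalities — is where the care is needed.
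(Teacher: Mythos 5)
Your proposal is correct and matches the paper's proof in all essentials: the paper formalizes your telescoping as an induction on the level $i$, maintaining the invariant $|T_{D_{i,t}}(s, V_{t+2^i}) - V_t(s)| \le 3(2^i-1)\beta + 2i\delta\sum_k a_{s,t,k}$, and uses exactly the same ingredients (the per-call \textsc{Reduce} guarantee, \Cref{prop-add-err0s0} for the $\hat V$-versus-$V$ swaps, \Cref{prop-add-err0s} for expanding the sensitivity into single-step confusion terms, and the product identity $T_{D^1 \times D^2}(s,w) = \sum_{s'} T_{D^1}(s,s') T_{D^2}(s',w)$ for transporting errors without amplification). The bookkeeping you flag as the main obstacle is handled in the paper exactly as you describe, via the monotonicity $a_{s',t+h,k} \le a_{s,t,k}$ for states $s'$ reachable from $s$.
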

\begin{proof}
For a state $s \in S_t$ and $k \geq t$, let $a_{s,k}$ be the maximum value of $\mathfrak C(s')$ over states $s' \in S_k$ reachable from $s$. We will show by induction on $i$ that each distribution $D_{i,t}$ and state $s \in S_t$ satisfies
\begin{equation}
\label{cr3}
| T_{D_{i,t}}(s,V_{t+2^i}) - T_{t,t+2^i}(s,W) | \leq  3 ( 2^i - 1)  \beta +  2 i \delta \sum_{k=t}^{t+2^i-1} a_{s,k} 
\end{equation}

The base case $i = 0$ is vacuous since $D_{0,t} = \Omega_t$. The final case  $i =  \lg n, t = 0$ establishes the claimed result, since $\delta = \frac{\eps}{20 (1 + \lg n)}$ and ${\mathfrak V}(s) = \sum_{t=0}^{n-1} a_{s,t}$ for $s \in S_0$. For the induction step $i$, write $h = 2^i, \delta' = 2 i \delta, \beta' = 3 ( 2^{i} - 1)  \beta$, and $D^1 = D_{i,t}, D^2 = D_{i,t+h},  D^{12} = D^1 \times D^2$. For each state $s$, we have:
$$
T_{D^{12}}(s,V_{t+2h})  =  \sum_{s' \in S_{t+h}} T_{D^1}(s,s') T_{D^2} (s', V_{t+2h}).
$$

By induction hypothesis, and the fact that $a_{s',k} \le a_{s,k}$ for $s'$ reachable from $s$, this is at most
\begin{align}&
\label{cg0}
\sum_{s'} T_{D^1}(s,s') \Bigl( V_{t+h}(s') + \beta' + \delta' \sum_{k=t+h}^{t+2h-1} a_{s,k}  \Bigr) = T_{D^1}(s,V_{t+h}) + \beta' + \delta' \sum_{k=t+h}^{t+2h-1} a_{s,k} 
\end{align}

From the induction hypothesis a second time, this in turn is at most 
\begin{align}
\label{cg0a}
V_t(s) + \beta' + \delta' \sum_{k=t}^{t+h-1} a_{s,k} + \beta' + \delta' \sum_{k=t+h}^{t+2h-1} a_{s,k} \leq V_t(s) + 2 \beta' + \delta' \sum_{k=t}^{t+2h-1} a_{s,k} 
\end{align}

A completely analogous calculation gives a lower bound on $T_{D^{12}}(s,V_{t+2h})$, and overall we have 
\begin{equation}
\label{cg1}
| T_{D^{12}}(s,V_{t+2h}) - V_t(s)| \leq 6 (2^i - 1)  \beta + 2 i \delta \sum_{k=t}^{t+2h-1} a_{s,k}
\end{equation}

By \cref{thm-reduce-additive}, the distribution $D_{i+1,t}$ after applying \textsc{Reduce} has
\begin{equation}
\label{cg2}
| T_{D_{i+1,t}}(s, \hat V_{t+2h}) - T_{D^{12}}(s,\hat V_{t+2h}) | \leq \delta \thinspace \mathfrak S_{t,t+2h}(s,\hat V_{t+2h}) 
\end{equation}

Since $V_{t+2h}$ and $\hat V_{t+2h}$ differ by at most $\beta$, (\ref{cg2}) implies that
\begin{equation}
\label{cg2a}
| T_{D_{i+1,t}}(s, V_{t+2h}) - T_{D^{12}}(s, V_{t+2h}) | \leq 2 \beta + \delta \thinspace \mathfrak S_{t,t+2h}(s,\hat V_{t+2h}) 
\end{equation}

\Cref{prop-add-err0s} combined with (\ref{cg2a}) implies that
\begin{equation}
\label{cg3}
| T_{D_{i+1,t}}(s,   V_{t+2h}) - T_{D^{12}}(s,  V_{t+2h}) | \leq 2 \beta +  \delta (2 \sum_{k = t}^{t+2 h-1} a_{s,k} + 2 \beta)
\end{equation}

Combining bounds (\ref{cg1}), (\ref{cg3}) establishes the overall bound:
\begin{align*}
|T_{D_{i+1,t}}(s, V_{t+2 h}) - V_t(s)| &\leq 6 (2^i - 1) \beta + 2 i \delta \sum_{k=t}^{t+2h-1} a_{s,k} + 2 \beta + \delta (2 \beta + 2 \sum_{k = t}^{t+2 h-1} a_{s,k} ) \\
&= ( 6 (2^i - 1) + 2 + \delta) \beta + 2 (i + 1) \delta \sum_{k=t}^{t+2h-1} a_{s,k} .
\end{align*}

As $\delta < 1/2$ by our assumption on $\eps$, we have  $6 (2^i - 1) + 2 + \delta \leq 3 (2^{i+1} - 1)$. This establishes the induction bound (\ref{cr3}) for the next iteration $i+1$ as desired.
\end{proof}

As we have mentioned, there can be problem-specific shortcuts to compute (or approximate) the transition matrices $T$ and resulting expected-value vectors $V_t$. There is also a generic method to compute them via matrix multiplication; this is good enough for many applications.

\begin{proposition}\label{prop-generic_omega}
All vectors $V_t = T_{t,n}^F(s, W)$, can be computed with $\tilde O( n \sigma +  n \eta^{\omega})$ processors, where $\omega$ is the exponent of any efficiently-parallelizable matrix-multiplication algorithm. 

Moreover, if we are given the transition matrices $T_{t,t+h}^F : h = 2^i, t = j 2^i$, then we can compute all vectors $V_t$ with $\tilde O(n \eta^2)$ processors.
\end{proposition}
\begin{proof}
We begin by recursively computing transition matrices $T_{t,t+h}$, for $h = 2^i$ and $t$ divisible by $h$, over rounds $i = 0, \dots, \lg n$. At level $i = 0$, we do it by simply summing over $\Omega_{t}$. At level $i > 0$, we can compute each $T_{t,t+2h}$ via the matrix product $T_{t,t+2h} = T_{t, t+h} T_{t+h, t+2h}$, using $\tilde O( \eta^{\omega} )$ processors. The total work at the $i^{\text{th}}$ level is $\tilde O( n/2^i \cdot \eta^{\omega} )$.   

Given these transition matrices, we compute the vectors $V_t = T_{t,n} W$ in a top-down fashion: at level $i = \lg n, \dots, 0$, we compute $T_{t,n} W$ for all $t$ divisible by $h = 2^i$.  For the initial level $ i = \lg n$, we simply obtain $T_{0,n} W$ by multiplying the computed matrix $T_{0,n}$ by the vector $W$. For $i < \lg n$, we compute the vectors $T_{t,n} W = T_{t, t+h} ( T_{t+h, n} W)$ in parallel for  $t \equiv 2^{i-1} \mod 2^i$; here the matrix $T_{t,t+h}$ was computed in the first phase and the vector $T_{t+h, n} W$ was computed at iteration $i+1$ of the second phase. 
\end{proof}

\begin{corollary}
\label{multi-aprop}
Consider statistical tests $i = 1, \dots, \ell$, each computed by its own automaton $F_i$ with statesize $\eta_i$. Running \textsc{Fool} on the resulting combined automaton $F$ has processor complexity $$
\tilde O \bigl(   (n / \eps^2 + n \sigma) \sum_i \eta_i  + n \sum_i \eta_i^{\omega} \bigr)
$$
\end{corollary}

\paragraph{Remark}  The Coppersmith-Winograd algorithm \cite{coppersmith1987matrix} gives $\omega \leq 2.38$. (See \cite{jaja1992} for further details on parallel implementation.)
However, fast matrix multiplication, while theoretically attractive, is notoriously impractical. It is often preferred to use \textit{combinatorial algorithms}, which essentially means using only  ``naive" cubic matrix multiplication (i.e. with $\omega = 3$). In all the application in this paper, the matrix multiplication is not the bottleneck. We can use cubic matrix multiplication with no overall increase in the work factor.

\section{Reducing the state space for counter-like automata}
\label{sec:counter}
An automaton may have certain states which are rarely reached for random inputs. The paradigmatic example is  an automaton which tracks a sum of some statistic; there, the running sum will be concentrated in a narrow band near the mean, surrounded by a much larger set  of potential values which have exponentially small probability. 

We would like to simply discard such rare states. However, done directly, this would destroy the Lipschitz properties of the statistic: there would be a hard transition as we cross the boundary between discarded rare states and their nearby close-to-rare states.  To handle this, we need to smoothly interpolate between common and rare states.  

We define a \emph{guard function $G$} to measure the rarity of each state $s$. Formally, for a Lipschitz vector $\vec \lambda \in [0,1]^n$ and $\delta \in [0,1]$, the function $G: S_0 \cup \dots \cup S_n \rightarrow \mathbb R_{\geq 0}$ must satisfy the following conditions:

\begin{enumerate}
\item[(G1)] For all starting states $s \in S_0$ we have $G(s) = 0$.
\item[(G2)] For any state $s \in S_t$ and drivestreams $\vec r^1, \vec r^2 \in \Omega_{t,t+h}$ differing in only coordinate $t$, there holds $|G(F(s,\vec r^1)) - G(F(s,\vec r^2 )) | \leq \lambda_t \leq 1$.
\item[(G3)] For any state $s \in S_t$ and horizon $h$, we have $\Pr_{\vec R \sim \Omega_{t,t+h}} \bigl( | G(s) - G(F(s,\vec R))| > 1 \bigr) \leq \delta/n.$
\end{enumerate}

Our strategy will be to construct a truncated automaton $\tilde F$, which only stores a subset of the states. The automaton $\tilde F$ has state space $$
\tilde S_t = \{ \bot \} \cup \{ s \in S_t: G(s) < 100 \};
$$
and uses the transition rule
$$
\tilde F(s,r) = 
\begin{cases}
F(s, r) & \text{if $s \neq \bot$ and $G(F(s, r)) < 100$} \\
\bot & \text{otherwise}
\end{cases}
$$

We define a related potential function $\phi: \tilde S \rightarrow [0,1]$ as 
$$
\phi(s) = 
\begin{cases}
    0 & \text{if $s = \bot$ or $G(s) \geq 90$} \\
       1  &\text{if $G(s) \leq 10$}\\
        9/8 - G(s)/80  & \text{if $10 < G(s) < 90$} 
        \end{cases}
        $$
and we define the weight function $\tilde W$ for automaton $\tilde F$ by
$$
\tilde W(s) = \begin{cases}
\phi(s)  W(s)  & s \neq \bot \\
0 & s = \bot
\end{cases}
$$

For purposes of analysis, the potential function $\phi$, and weight function $\tilde W$, can also be applied on the original state space $S$, with the convention that $\phi(s) = 0$ for any state $s$ with $G(s) \geq 100$.

\begin{proposition}
\label{couple-prop}
For any state $s \in S_t$, let $\vec R$ be a drivestream drawn randomly from $\Omega_{t,n}$. Let $z = F(s,\vec R), \tilde z = \tilde F(s,\vec R)$ be the resulting final states for automata $F, \tilde F$ respectively.
\begin{enumerate}
\item If $G(s) \geq 91$, then $\Pr(  \phi(z) = \phi( \tilde z) = 0) \geq 1 - \delta$.
\item If $G(s) \leq 99$, then $\Pr(  z = \tilde z \neq \bot) \geq 1 -   \delta$.
\item If $G(s) \leq 9$, then $\Pr(  z = \tilde z \text{ and } \phi(z) = 1) \geq 1 -  \delta$.
\end{enumerate}
\begin{proof}
For the first result, property (G3) implies that $|G(s) - G(z)| \leq 1$ holds with probability at least $1 - \delta/n$. Suppose it does so hold; in this case, we have $G(z) \geq 90$ and hence $\phi(z) = 0$. Either $\tilde z = z$ or $\tilde z = \bot$; in either case, we have $\phi(\tilde z) = 0$.

For the next two results, let $s_i$ denote the state at each time $i$. Property (G3) and a union bound imply that, with probability at least $1 - \delta$, there holds $|G(s_i) - G(s)| \leq 1$ for all $i$. Suppose this holds. Then $s_i$ never reaches a state with $G(s_i) > 100$. So $\tilde z \neq \bot$ and hence $\tilde z = z$. Furthermore, property (G1) implies that   $G(z) = G(s_n) \leq G(s_0) + 1 = 1$.
\end{proof}
\end{proposition}

\begin{lemma}
\label{vvprop99}
Define the parameters
$$
\Delta = \max_{s \in S_n} |W(s)|, \qquad \tilde \Delta = \max_{\substack{s \in \tilde S_n \setminus \{ \bot \}}}  |W(s)|.
$$
\begin{enumerate}
\item For any time $t$ and non-reject state $s \in \tilde S_t$, there holds $| T^{\tilde F}_{t,n}(s, \tilde W) - T^F_{t,n}(s, \tilde W)| \leq \delta \tilde \Delta$.

\item For any starting state $s$, there holds $| T^{\tilde F}_{0,n}(s, \tilde W) - T^F_{0,n}(s, W)| \leq \delta \Delta.$

\item For any state $s \in \tilde S_t \setminus \{ \bot \}$, there holds $$
{\mathfrak C}^{\tilde F}(s) \leq  \mathfrak C^F(s) + 5 \Delta \delta + \tilde \Delta \lambda_t.$$

\item For any starting state $s$, there holds $$
\mathfrak V^{\tilde F}( s ) \leq \mathfrak V^F(s) + 5 n \Delta \delta + \tilde \Delta \sum_{t=0}^{n-1} \lambda_t.
$$
\end{enumerate}
\end{lemma}
\begin{proof}
\begin{enumerate}
\item Let $z, \tilde z$ be the final states for automata $F, \tilde F$ under a common drivstream drawn from $\Omega_{t,n}$, so that $T^F_{t,n}(s,\tilde W) = \E[ \tilde W(z) ]$ and $T^{\tilde F}_{t,n}(s,\tilde W) = \E[ \tilde W( \tilde z) ]$.  

If $G(s) \geq 91$, then by \Cref{couple-prop} we have $\Pr( \phi(z) = \phi(\tilde z) = 0 ) \geq 1 - \delta$; when this holds we have $\tilde W(z) = \tilde W( \tilde z) = 0$. Similarly, if $G(s) < 91$, then $\Pr(z = \tilde z \neq \bot) \geq 1-\delta$, in which case $W(z) = \tilde W( \tilde z)$. We also have $|\tilde W (z) |, |\tilde W (\tilde z) | \leq \tilde \Delta$ always.

\item Again, let $z, \tilde z$ be the final states for automata $F, \tilde F$ from a starting state $s$ under a common drivestream drawn from $\Omega_{0,n}$. By (G1) we have $G(s) = 0$ so by \Cref{couple-prop} there is a probability of at least $1 -\delta$ that $z = \tilde z$ and $\phi(z) = \phi(\tilde z) = 1$, in which case  $\tilde W( \tilde z) = W( z)$. We also have $ |W(z)|, \tilde W( \tilde z)| \leq \Delta$.

\item Let $r_1, r_2 \in \Omega_t$; for $i = 1,2$ let $s_i = \tilde F(s,r_i)$ and let $z_i, \tilde z_i$ be the final states for automata $F, \tilde F$ reached from $s_i$ under a common drivestream drawn from $\Omega_{t+1, n}$. We need to show that
$$
\E \bigl[ |\tilde W( \tilde z_1) - \tilde W( \tilde z_2) |  \bigr] \leq \mathfrak C^F(s) + 5 \Delta \delta + \tilde \Delta \lambda_t 
$$

By Property (G2), since $z_1, z_2$ are obtained by stepping under drivestreams which differ only in coordinate $t$, we have $|\phi(z_1) - \phi(z_2)| \leq | G(z_1) - G(z_2) | \leq \lambda_t$.

If $G(s) \geq 92$, then by Property (G2) we have $G(s_1) \geq 91, G(s_2) \geq 91$. Then $\phi(\tilde z_1) = \phi(\tilde z_2) = 0$ with probability at least $1-2 \delta$. Since $|\tilde W(\tilde z_1) - \tilde W(\tilde z_2) | \leq 2 \tilde \Delta$ always, we overall have ${\mathfrak C}^{\tilde F}(s) \leq  4 \tilde \Delta \delta$, and the bound is clear.

So suppose $G(s) < 92$. Property (G2) gives $G(s_1) \leq 93, G(s_2) \leq 93$. So, with probability at least $1-2 \delta$ we have $\tilde z_1 = z_1 \neq \bot$ and $\tilde z_2 = z_2 \neq \bot$. Since again  $|\tilde W(\tilde z_1) - \tilde W(\tilde z_2) | \leq 2 \tilde \Delta$ always, we have shown so far that
$$
  \E \bigl[  | \tilde W( \tilde z_1) - \tilde W( \tilde z_2) | ] \leq  \E \bigl[ | \phi(z_1) W(  z_1) - \phi(z_2) W(  z_2) | ] +  4 \tilde \Delta \delta
$$

We can in turn factor this as 
\begin{align*}
&  \E[ | \phi(z_1) W(  z_1) - \phi(z_2) W(  z_2) | ]  = \E[ | \phi(z_2) ( W(  z_1) - W(  z_2)) + (\phi(z_1) - \phi(z_2)) W(z_1) | ] \\
&\qquad \qquad \leq  \E[ \phi(z_2) \cdot | W(  z_1) - W(  z_2)|  ] + \E[ | \phi(z_1) - \phi(z_2)| \cdot  |W(z_1) | ] \\
&\qquad \qquad \leq  \E[ | W(  z_1) - W(  z_2)|  ] + \lambda_t \E[ | W(z_1) | ]
 \end{align*}

The first term is at most $\mathfrak C^F(s)$, and $\E \bigl[ |W(z_1)| \bigr] \leq \tilde \Delta + \Delta \Pr( G(z_1) > 100) \leq \tilde \Delta + \delta \Delta$.

\item We use part (c) and calculate here:
\begin{align*}
\mathfrak V^{\tilde F}(s) &= \sum_{h=1}^{n} \max_{\vec r \in \Omega_{0,h}} \mathfrak C^{\tilde F}(   F(s, \vec r)) \leq  \sum_{h=1}^{n} \max_{\vec r \in \Omega_{0,h}} (\mathfrak C^F(F(s,\vec r)) + 5 \Delta \delta + \tilde \Delta \lambda_{h-1}) \\
&= 5 n \Delta \delta  + \tilde \Delta \sum_{t=0}^{n-1} \lambda_t +   \sum_{h=1}^{n} \max_{\vec r \in \Omega_{0,h}} \mathfrak C^F(F(s,\vec r)) = \mathfrak V^F(s) + 5 n \Delta \delta + \tilde \Delta \sum_{t=0}^{n-1} \lambda_t. \qedhere
\end{align*} 
\end{enumerate}
\end{proof}

\subsection{Guard constructions}
The simplest type of guard applies to a counter computing a running sum of the form 
$$
\sum_t f_t( R_t) \qquad \text{for functions $f_t:  \Omega_t \rightarrow \mathbb Z$}
$$

 They are ubiquitous in randomized algorithm.  We can analyze their behavior and construct guards via Bennet's concentration inequality for sums of independent random variables \cite{B62}. Let us define some relevant parameters:
$$
\mu_t = \E_{r \sim \Omega_t}[ f_t(r) ], \quad
M_t = \max_{r \in \Omega_t} |f_t(r) - \mu_t |, \quad \kappa  = \sum_{t=0}^{n-1} \Var_{r \sim \Omega_t}[f_t(r)], \quad M = \max_{t \in [n]} M_t
$$

It is straightforward to show the following bound:
\begin{lemma}
\label{counterxx}
For $B \geq  1000  (1 + M + \sqrt{\kappa}) \log(n/\delta)$, the function $G(s)  = |s - \mu_t|/B$ defines a guard for the counter, with concentration parameter $\delta$ and Lipschitz vector $\lambda_t = \frac{M_t}{B}$. The resulting automaton $\tilde F$ has statesize $\tilde \eta \leq O( B )$. 

In this context, we refer to $B$ as the \emph{span} of the automaton.
\end{lemma}

In a slightly more general setting, we may need to keep track of multiple counters, combined via a non-linear weight function (see, for example, the MAX-CUT algorithm described later). Generically, let us define the \emph{Cartesian product} of two automata $F^1, F^2$ to be a combined automaton $F = F^1 \times F^2$, with state space $S_t = S^1_t \times S^2_t$, and with transition rule $$
F( (s_1, s_2),r ) = (F^1( s_1, r), F^2(s_2,r))
$$

Critically, our use of guards allows us to transport truncated automata for $F_1, F_2$  into a truncated automaton for $F$ itself.

\begin{lemma}
\label{cartesian-lemma}
Suppose that $G_1, G_2$ are guards for $F_1, F_2$, with concentration bounds $\delta_1, \delta_2$ and Lipschitz vectors $\lambda^1, \lambda^2$ respectively. Then, for a state $s = (s_1, s_2)$ of automaton $F = F_1 \times F_2$, the function $G(s) = \max\{ G_1(s_1), G_2(s_2) \}$ defines a guard with concentration parameter $\delta_1 + \delta_2$ and Lipschitz vector $\lambda_t = \max \{ \lambda^1_t, \lambda^2_t \}$. 

Furthermore, if the truncated automata $\tilde F_1, \tilde F_2$ have statesizes $\tilde \eta_1$ and $\tilde \eta_2$ respectively, then the truncated automaton $\tilde F$ has statesize $\tilde \eta \leq \tilde \eta_1 \tilde \eta_2$.
\end{lemma}

\section{Applications}
We present the derandomization of two basic problems: the Gale-Berlekamp Switching Game (\cref{sec:bgame}) and approximate MAX-CUT via SDP rounding (\cref{sec:max-cut}).  Through our new construction, we improve the deterministic processor complexity for both problems.

\subsection{Gale-Berlekamp Switching Game}\label{sec:bgame}
The Gale-Berlekamp Switching Game is a classic combinatorial problem: given an $n \times n$ matrix $A$ with entries $A_{ij} = \pm 1$, we want to find vectors $x, y \in \{-1, +1 \}^n$ to maximize the imbalance $I = \sum_{i,j} A_{i,j} x_i y_j$.  Our new construction applies in a straightforward way to this problem.

\begin{theorem}\label{thm-gb_game}
 There is a parallel deterministic algorithm to find $\vec x, \vec y \in \{-1, +1\}^n$ with imbalance
    $I \geq (\sqrt{2/\pi} - o(1)) n^{3/2}$ using $\tilde O (n^{3.5} )$ processors.
\end{theorem}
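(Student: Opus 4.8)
The plan is to phrase the classic Central Limit Theorem argument of Alon--Spencer as a statistical test computed by counter automata, and then feed the whole thing into the \textsc{Fool} machinery via \Cref{multi-aprop}, using the state-space reduction of \Cref{sec:counter} to keep the per-automaton statespace near $\sqrt n$ rather than $n$. Concretely: for a fixed column-sign vector $\vec y$, the randomized algorithm of \cite{alon-spencer} draws $\vec y$ uniformly from $\{-1,+1\}^n$ and then sets $x_i = \sgn(\sum_j A_{ij} y_j)$, achieving imbalance $\sum_i |\sum_j A_{ij}y_j|$, whose expectation is $(\sqrt{2/\pi}-o(1)) n^{3/2}$ by the CLT applied to each row sum $R_i = \sum_j A_{ij} y_j$. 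So the underlying probability space is $\Omega = \{-1,+1\}^n$ with $r_t = y_t$, and for each row $i$ we have a counter automaton $F_i$ tracking the running partial sum $\sum_{j \le t} A_{ij} y_j$; its final weight function is $W_i(c) = |c|$. The pessimistic-estimator / weight to fool is $W(\vec y) = \sum_i |R_i(\vec y)|$; we want a distribution $D$ of polynomial support on which $\E_{\vec y \sim D}[\sum_i |R_i|] \ge (\sqrt{2/\pi} - o(1)) n^{3/2}$, and then search $D$ exhaustively in parallel for a good $\vec y$ (after which $\vec x$ is computed by $n$ independent sign evaluations).

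The key quantitative step is bounding the variability $\mathfrak V_i(s)$ for the start state of each row automaton so that the additive error $\eps \mathfrak V_i(0_i)$ guaranteed by \Cref{multi-aprop} is $o(n^{3/2})$ after summing over the $n$ rows. Here the parameters of \Cref{sec:counter} are $f_t(r) = A_{it} r$, so $\mu_t = 0$, $M_t = 1$, $M = 1$, and $\kappa = \sum_t \Var(A_{it} y_t) = n$; thus the span is $B = \tilde O(\sqrt n)$, the truncated statespace is $\eta_i = \tilde O(\sqrt n)$, and $\eta = \sum_i \eta_i = \tilde O(n^{1.5})$. For the variability: each confusion $\mathfrak C(s,t)$ for the truncated automaton is controlled by \Cref{vvprop99}(c), which bounds $\tilde{\mathfrak C}(c,t) \le \mathfrak L(c,t) + 2\Delta\delta + 6\tilde\Delta M_t/B$. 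For the absolute-value weight on a $\sqrt n$-window, $\Delta, \tilde\Delta = \tilde O(\sqrt n)$ and $\mathfrak L(c,t) \le 2$ (a single $\pm 1$ step changes $|c|$ by at most $2$), so $\mathfrak C(s,t) = \tilde O(1)$; summing the $n$ timesteps gives $\mathfrak V_i(0_i) = \tilde O(n)$. Taking $\eps = 1/\polylog(n)$ (small enough that $\eps \cdot n \cdot \tilde O(n) = o(n^{3/2})$) makes the error negligible, while $\beta = 0$ because with the damped weight $\tilde W$ the vectors $V_t$ can be computed exactly via \Cref{prop-generic_omega} over the $\tilde O(\sqrt n)$-size statespace. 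I should double-check that replacing $W_i$ by the damped $\tilde W_i$ only loses $\delta\Delta = \tilde O(\sqrt n)$ per row by \Cref{vvprop99}(b), hence $\tilde O(n^{1.5})$ total --- which is again absorbed into the $o(n^{3/2})$ slack as long as $\delta$ is a sufficiently small inverse-polylog; since $B$ depends on $\log(n/\delta)$ only logarithmically this is consistent.

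It remains to total up the processor cost. By \Cref{multi-aprop}, \textsc{Fool} costs $\tilde O(n\eta/\eps^2 + \sigma\eta + n\sum_i \eta_i^\omega)$ with $\sigma = \sum_t |\Omega_t| = 2n$. With $\eta = \tilde O(n^{1.5})$, $\eps^{-2} = \polylog$, and $\eta_i = \tilde O(\sqrt n)$ so that $\sum_i \eta_i^\omega = n \cdot \tilde O(n^{\omega/2}) = \tilde O(n^{1+\omega/2})$: the first term is $\tilde O(n^{2.5})$, the second $\tilde O(n^{2.5})$, and the third $\tilde O(n^{2+\omega/2})$, which is $\tilde O(n^{3.5})$ even with naive $\omega = 3$. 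One also needs $\tilde O(n^2)$ processors to build the transition matrices $T_{t,t+h}$ for the counter automata (trivial, since each single-step transition is a shift-by-$\pm 1$ on a $\sqrt n$-window), plus $\tilde O(n^2 \cdot |D|)$ to evaluate $\sum_i |R_i|$ on each of the $|D| = \polylog(n)/\eps^2$ candidates --- also $\tilde O(n^2)$. So the bottleneck is the matrix-multiplication step for computing the $V_t$ vectors, giving $\tilde O(n^{3.5})$ overall. The main obstacle in the writeup is the careful bookkeeping in the previous paragraph: one must be sure that \emph{all} the accumulated additive errors --- the $\eps\mathfrak V_i$ fooling error, the $\delta\Delta$ truncation error per row, and the $\beta n$ term (which vanishes here) --- sum to $o(n^{3/2})$ simultaneously with a single consistent choice of $\eps$ and $\delta$, and that the span $B$ implied by that $\delta$ is still $\tilde O(\sqrt n)$ so the statespace bound holds. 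Everything else is routine.
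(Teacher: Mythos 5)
Your overall route is exactly the paper's: truncated counter automata per row with span $B = \tilde O(\sqrt n)$, damped weight $\tilde W(c) = \phi(c)\,|c|$, variability $\mathfrak V_i(0) = \tilde O(n)$ via \Cref{vvprop99}(c), and the cost accounting through \Cref{multi-aprop}. However, there is a genuine quantitative error in your choice of $\eps$. You take $\eps = 1/\polylog(n)$ and assert that $\eps \cdot n \cdot \tilde O(n) = o(n^{3/2})$, but that product is $\tilde O(n^2)/\polylog(n)$, which is far \emph{larger} than $n^{3/2}$. Since the per-row fooling error is $\eps\,\mathfrak V_i(0) = \eps\cdot\tilde O(n)$ and must be $o(\sqrt n)$ for the expected imbalance $\sqrt{2n/\pi}$ per row to survive, you are forced to take $\eps = O(1/\sqrt{n\log n})$ (as the paper does), not an inverse polylog. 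With your $\eps$ the fooling error swamps the signal entirely and the correctness argument collapses.

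Correcting $\eps$ also changes your complexity bookkeeping: the term $n\eta/\eps^2$ becomes $n \cdot \tilde O(n^{1.5}) \cdot \tilde O(n) = \tilde O(n^{3.5})$ rather than $\tilde O(n^{2.5})$, so the \textsc{Fool} loop itself is a co-bottleneck alongside the $n\sum_i \eta_i^{\omega} = \tilde O(n^{2+\omega/2}) = \tilde O(n^{3.5})$ matrix-multiplication term (with $\omega = 3$); your attribution of the $n^{3.5}$ solely to matrix multiplication is therefore off, even though the final figure $\tilde O(n^{3.5})$ happens to be right. The remaining bookkeeping (truncation error $\delta\Delta$ per row with $\delta = n^{-10}$ polynomially small rather than inverse-polylog, $\beta = 0$ from exact computation of the $V_t$ on the truncated statespace) is consistent with the paper once $\eps$ is fixed.
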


In order to show \Cref{thm-gb_game}, following \cite{B97} and \cite{alon-spencer}, we set $x_i = 1$ if $\sum_j A_{ij} y_j > 0$, and $x_i = -1$ otherwise. This gives
    $$
    I =  \sum_{i,j} A_{i,j} x_i y_j = \sum_i x_i \sum_j A_{i,j} y_j = \sum_i \Bigl| \sum_j A_{i,j} y_j \Bigr|.
    $$
    
    As shown in \cite{BS71}, for $\vec y$ uniformly drawn from $\Omega = \{-1, +1 \}^n$, there holds $\E[|\sum_j A_{ij} y_j|] \geq \sqrt{2 n/\pi} - o(\sqrt{n})$ for each $i$. This is the statistical test we want to fool. Accordingly, we take $\Omega_t$ to be uniform distribution on $\{-1, 1 \}$ and set $y_t = R_t$.   For each $i$, we have a counter automaton $F_i$ to track the sum $\sum_t A_{it} R_t$, with weight function $W(s) = |s|$.     Since $\mathfrak C^{F_i}(s) \leq 2$ for each state $s$, the automaton $F_i$ has total variability $2 n$. 
    
    We construct a truncated automaton $\tilde F_i$ via \Cref{counterxx} with $\delta = n^{-10}$, where $M = 1$ and $\kappa = \sum_j A_{ij}^2 = n$ and $B = \Theta(\sqrt{n} \log n)$ and $\lambda_t = \tilde O(1/\sqrt{n})$. Here $\Delta \leq n$ and $\tilde \Delta \leq \tilde O( \sqrt{n})$. By \Cref{vvprop99}, the automaton $\tilde F_i$ has statesize $\tilde O(\sqrt{n})$, has  total variability
$$
\mathfrak V^{\tilde F_i}(s) \leq \mathfrak V^{F_i}(s) + 5 n \Delta \delta + \tilde \Delta \sum_{t=0}^{n-1} \lambda_t \leq \tilde O(n)
$$
and the final state $\tilde z$ satisfies
$$
\E[\tilde W_i(\tilde z)] \geq \E[W_i(z)] - \delta \Delta \geq \sqrt{2 n/\pi} - o(\sqrt{n}).
$$

The complexity now follows  from \cref{multi-aprop} with $\eps = 1/\sqrt{n \log n}$, where we have $n$ automata each of statesize $\hat \eta_i = \tilde O(\sqrt{n})$. For any row $i$, the final state $\tilde z$ satisfies
\begin{align*}
\E_D \Bigl[ |\sum_j A_{ij} y_j | \Bigr] \geq \E_D \bigl[ \tilde W_i(z) \bigr] \geq \E_{\Omega} \bigl[ W_i(\tilde z) \bigr] - o(\sqrt{n}) \geq  \sqrt{2 n/\pi} - o(n)
\end{align*}
By searching the space exhaustively, we can find a specific sequence $\vec y$ with the desired imbalance bounds.    This concludes the proof of \Cref{thm-gb_game}.

\subsection{Approximate MAX-CUT}\label{sec:max-cut}
Consider a graph with $n$ vertices and $m$ edges. The MAX-CUT problem is to find a vertex set maximizing the total weight of edges with exactly one endpoint in it. The seminal work of Goemans \& Williamson \cite{GW95} showed that MAX-CUT can be approximated to a factor of $\alpha \approx 0.878$ by rounding a semi-definite program (SDP). Moreover, the  integrality gap of the SDP is precisely $\alpha$ \cite{FS02}, and assuming the Unique Games Conjecture no better approximation ratio is possible for polynomial-time algorithms \cite{khot2007optimal}.

To review, note that MAX-CUT can be formulated as the following integer program:
\begin{equation*}
\max \quad \frac{1}{2} \sum_{(i,j) = e \in E} w_{e}(1 - v_i \cdot v_j) 
\quad \text{s.t.} \quad v_i \in \{-1, 1\}: i \in V.
\end{equation*}

This can be relaxed to the following SDP:
\begin{equation*}
\max \quad \frac{1}{2} \sum_{(i,j) = e \in E} w_{e}(1 - v_i \bullet v_j) 
\quad \text{s.t.} \quad v_i \in [-1,1]^n, ||v_i||_2 = 1: i \in V,
\end{equation*}
where $\bullet$ denotes the inner product in $\mathbb R^n$.

The SDP relaxation  itself can be approximated in polylogarithmic time and near-linear work by \cite{ALO16}, so we will be concerned with rounding the solution $v$.  For the randomized algorithm, this is done by drawing independent standard Gaussian random variables $X_0, \dots, X_{n-1}$ and producing the cut $\mathcal C = \{ i \mid v_i \bullet X \geq 0 \}$.  The resulting (random) cutsize $W$ satisfies
\begin{align*}
    \mathbb{E}[W] &= \sum_{(i,j) = e \in E} w_{e} \Pr[(i,j) \text{ is cut}] = \sum_{(i,j) = e \in E} w_{e} \Pr[\sgn(v_i \bullet X) \neq \sgn(v_j \bullet X)] \\  &
    = \sum_{(i,j) = e \in E} w_{e} \frac{\arccos(v_i \bullet v_j)}{\pi} \geq \alpha \cdot \text{OPT},
\end{align*}
where OPT denotes the size of the maximum cut. Following \cite{S02}, we will derandomize this by fooling a statistical test for each edge $ij$.

\begin{theorem}
\label{maxcut-thm}
     There is a deterministic parallel algorithm to find an $\alpha (1 - \eps)$-approximate rounding of a MAX-CUT SDP solution with $\tilde O( \frac{m n^3}{\eps^4})  $ processors.
\end{theorem}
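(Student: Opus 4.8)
The plan is to instantiate the \textsc{Fool} framework (via \Cref{multi-aprop}) with one counter automaton per edge, tracking a running sum that is a surrogate for $\Pr[\sgn(v_i\bullet X)\neq\sgn(v_j\bullet X)]$, and then to drive down all error terms by choosing the truncation parameter $B$, the discretization granularity for the Gaussians, and $\eps$ appropriately. First I would set up, for each edge $e=(i,j)$, a quantized random input: instead of exact standard Gaussians $X_t$, use discretized truncated Gaussians $\hat X_t$ supported on $\tilde O(1)$ values within a few standard deviations of $0$, with the discretization fine enough that the total variation (and the resulting change in cut probabilities, aggregated over all $m$ edges) is $o(\eps\cdot\text{OPT})$. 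The statistical test for edge $e$ is a function of the pair of signs $\sgn(v_i\bullet X),\sgn(v_j\bullet X)$; following \cite{S02}, I would replace each hard sign by the running sum of the quantized contributions $v_{i,t}\hat X_t$ and $v_{j,t}\hat X_t$, so that the automaton for edge $e$ is (essentially) a product of two truncated counters of span $B=\tilde O(1)$ built via \Cref{sec:counter}, giving state space $\eta_e=\tilde O(B^2)=\tilde O(1)$ per edge and hence $\eta=\sum_e\eta_e=\tilde O(m)$.

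Next I would specify the pessimistic estimator / weight function. Rather than fooling the discontinuous indicator ``the two signs differ'', I would use a smoothed weight function with a controlled Lipschitz value — this is point (ii) of the technical-approach paragraph — so that each per-edge automaton has $\mathfrak V_e(0)=\tilde O(1)$ (the running sums have variance $\tilde O(1)$ after normalization, and the confusion at each step is $\tilde O(1/\sqrt{n})$ by the counter bounds of \Cref{vvprop99}(c) combined with \Cref{couple-prop}), so that summing over the $n$ steps gives $\mathfrak V_e(0)=\tilde O(1)$. Then \Cref{multi-aprop} with $\eps'=\Theta(\eps)$ produces a distribution $D$ of polylogarithmic support with $|T_D(0_e,\tilde W_e)-T_\Omega(0_e,\tilde W_e)|\le \eps'\mathfrak V_e(0)$ per edge; combining with (A) the truncation error $\delta\Delta$ from \Cref{vvprop99}(a)–(b), (B) the quantization error, and (C) the smoothing/pessimistic-estimator gap between the smoothed weight and the true cut indicator, one gets $\E_D[W]\ge \alpha(1-\eps)\text{OPT}$ after rescaling parameters; an exhaustive search over the support of $D$ then yields a deterministic rounding achieving this bound.

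For the processor complexity, \Cref{multi-aprop} gives $\tilde O(n\eta/\eps'^2+\sigma\eta+n\sum_e\eta_e^\omega)=\tilde O(nm/\eps^2+nm)$ for the \textsc{Fool} calls themselves, so the bottleneck is computing the expected-value vectors $\hat V_t$ for all $mn$ (edge,time) state families. This is the step where the FFT optimization — point (iii) — enters: each single-step transition matrix of a counter automaton is a convolution of the current-state distribution with the distribution of $v_{i,t}\hat X_t$, so the transition matrices $T_{t,t+h}$ can be computed by FFT in $\tilde O(B)$ per automaton per merge rather than $\tilde O(B^2)$ or $\tilde O(B^\omega)$; composing across the $\lg n$ levels and over all $m$ automata and $n$ times gives $\tilde O(mn^2)$ for that phase, dominated by the $\tilde O(mn^3/\eps^4)$ claimed (the extra $n/\eps^4$ slack absorbs the dependence of $B$ and the quantization grid on $\eps$, and the cost of evaluating the pessimistic estimator, which I have not tracked tightly here).

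The main obstacle I expect is controlling the three coupled approximations — Gaussian truncation/quantization, counter-state truncation, and the smoothed pessimistic estimator — simultaneously, and in particular showing that the smoothed estimator's slack plus the accumulated per-edge error is still $o(\eps\cdot\text{OPT})$ \emph{additively over all $m$ edges} while keeping each per-edge Lipschitz/confusion bound small enough that $\mathfrak V_e(0)=\tilde O(1)$. Getting a pessimistic estimator whose value provably lower-bounds $\arccos(v_i\bullet v_j)/\pi$ up to a $(1-\eps)$ factor, uniformly in the correlation $v_i\bullet v_j\in[-1,1]$ (including the near-antipodal regime where $\arccos$ has unbounded derivative), is the delicate part and is presumably where the paper's problem-specific analysis is concentrated.
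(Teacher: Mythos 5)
Your high-level architecture is the paper's: quantized truncated Gaussians as the drivestream, a product of two truncated counters per edge built via \Cref{sec:counter}, a smoothed pessimistic estimator in place of the sign indicator, FFT-based convolution for the transition matrices, and \Cref{multi-aprop} to combine the $m$ edge tests. However, there is a genuine quantitative gap in your parameter accounting that the construction cannot avoid. You claim span $B=\tilde O(1)$, hence $\eta_e=\tilde O(1)$ states per edge and $\mathfrak V_e(0)=\tilde O(1)$. This is unachievable: the counter framework requires integer-valued increments, so each contribution $v_{it}X_t$ must be rounded to a grid of granularity $\gamma$, and the $n$ independent rounding errors accumulate to $\approx\gamma\sqrt n$. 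Since the test must resolve the sign of $v_i\bullet X$ (a standard Gaussian, typical magnitude $\Theta(1)$) to accuracy $\eps$, one is forced to take $\gamma\approx\eps/\sqrt{n}$ (\Cref{coupling-prop}). The integer counter $\tfrac{1}{\gamma}(v_i\star r)$ then has standard deviation $\tilde O(1/\gamma)$, so $B=\tilde\Theta(\sqrt n/\eps)$ and $\eta_e=\tilde O(n/\eps^2)$. Likewise the confusion at step $t$ is $\tilde O(\eps^{-1}(|v_{it}|+|v_{jt}|))$ — the $1/\eps$ coming from the Lipschitz constant of the estimator $\min\{1,|c_i-c_j|/\eps\}$ — which sums by Cauchy--Schwarz to $\mathfrak V_e(0)=\tilde O(\sqrt n/\eps)$, not $\tilde O(1)$. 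Consequently \textsc{Fool} must be run at accuracy $\eps'\approx\eps/\sqrt n$, and both the fooling term $n\eta_e/\eps'^2$ and the cost $\tilde O(n\eta_e^2)$ of converting the FFT-built transition matrices into the vectors $\hat V_t$ (\Cref{prop-maxcut_omega}) are $\tilde O(n^3/\eps^4)$ \emph{per edge}. The stated bound is tight for this construction, not slack as you suggest, and your $\tilde O(nm/\eps^2+nm)$ estimate for the \textsc{Fool} phase is off by roughly a factor of $n^2/\eps^2$.

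A secondary point: the difficulty you flag about the near-antipodal regime (unbounded derivative of $\arccos$) is not where the delicacy lies. The paper never compares angles; it compares probabilities additively, showing via the coupling and Gaussian anti-concentration ($\Pr(|v_i\bullet X|<2\eps)=O(\eps)$) that the smoothed estimator's expectation is at least $\Pr(\sgn(v_i\bullet X)\neq\sgn(v_j\bullet X))-O(\eps)$ for every edge (\Cref{ija-prop}); the per-edge additive loss $O(\eps)w_e$ sums to $O(\eps\cdot\mathrm{OPT})$ because $\mathrm{OPT}\geq\tfrac12\sum_e w_e$. The genuinely delicate parts are instead the ones your accounting elides: keeping the state space and variability under control despite the forced fine quantization, and bounding the confusion of the \emph{truncated} automata in terms of the truthful ones.
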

Via the algorithm of \cite{ALO16} to solve the SDP, this gives the following main result:
\begin{corollary}
For arbitrary constant $\eps > 0$, we can find an $\alpha(1 - \eps)$-approximate MAX-CUT solution using $\tilde O( m n^3 )$ processors and polylogarithmic time.
\end{corollary}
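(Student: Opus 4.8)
The corollary is immediate from \cref{maxcut-thm}: run the parallel SDP solver of \cite{ALO16} to obtain, in polylogarithmic time and $\tilde O(m)$ processors, a feasible SDP solution of value at least $(1-\eps)\,\mathrm{OPT}$, and then apply \cref{maxcut-thm} with $\eps$ a constant, so $\eps^{-4}=O(1)$ and the rounding costs $\tilde O(mn^3)$ processors; adjusting the constant in front of $\eps$ absorbs both losses. So the real content is \cref{maxcut-thm}. My plan is to carry out the Goemans--Williamson rounding inside the automata-fooling framework of \cref{multi-aprop}, using the three optimizations advertised in the introduction: discretized and truncated Gaussians with quantized counters, a smooth pessimistic-estimator weight function, and FFT-based computation of the transition matrices. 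As set up above, it suffices to build a distribution $D$ on the (discretized) Gaussian vector $X$ so that for every edge $e=(i,j)\in E$ the probability that $e$ is cut under $D$ lies within additive $O(\eps)$ of the true value $\tfrac1\pi\arccos(v_i\bullet v_j)$; since $\mathrm{OPT}\ge\tfrac12\sum_e w_e$, the per-edge errors weighted by $w_e$ then sum to at most $O(\eps)\,\mathrm{OPT}$, and after rescaling $\eps$ and exhaustively searching $D$ we recover an $\alpha(1-\eps)$ rounding.

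First I would discretize: replace each $X_t\sim N(0,1)$ by a version supported on a bounded grid, with truncation radius $\tilde O(1)$ and grid spacing fine enough that the only statistics we care about --- the joint sign patterns of $v_i\bullet X$ and $v_j\bullet X$ over the edges --- are perturbed by at most $O(\eps)$; the truncation error is bounded by Gaussian tails and the discretization error by anti-concentration of $v_i\bullet X$ near $0$, the only place a sign can flip. For each edge $e=(i,j)$ I then build a two-dimensional truncated counter automaton $\tilde F_e$ in the sense of \cref{sec:counter}, tracking the quantized partial sums of $\{(v_i)_t X_t\}_t$ and $\{(v_j)_t X_t\}_t$ within $\pm B$ of their (zero) means; since each partial sum has total variance at most $\|v_i\|^2=1$, the span $B$, and hence the per-edge state space $\eta_e=\tilde O(B^2)$, is governed by the quantization granularity. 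Edges with $|v_i\bullet v_j|$ extremely close to $1$ are handled by a preliminary perturbation of the SDP solution, which keeps the relevant single-step magnitudes bounded.

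The crux, for both correctness and efficiency, is the weight function. The true conditional cut probability is a valid pessimistic estimator but its Lipschitz value degenerates as $t\to n$, since the remaining variance vanishes and the confusion $\mathfrak C$ of the sign indicator blows up. Instead $W_e$ is a smooth surrogate $\psi_e$ of the two final counter values that is (i) everywhere a lower bound on the cut indicator, (ii) has expectation under $\Omega$ at least the true cut probability minus $O(\eps)$, and (iii) is $\tilde O(1/B)$-Lipschitz in each counter coordinate uniformly in $t$ --- obtained essentially by never letting the effective future variance drop below a small floor. Feeding this through the damping estimates of \cref{vvprop99} and the coupling of \cref{couple-prop} gives confusion $\tilde{\mathfrak C}(c,t)=\tilde O\bigl((|(v_i)_t|+|(v_j)_t|)/B+\cdots\bigr)$, hence, summing over $t$ and using $\sum_t|(v_i)_t|\le\sqrt n$, total variability $\mathfrak V_e(0)=\tilde O(1)$ for each edge. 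The transition matrices are computed by exploiting convolution structure: one step of the untruncated counter adds the perfectly-correlated pair $\bigl((v_i)_tX_t,(v_j)_tX_t\bigr)$, so the transition matrix over any dyadic window, viewed as a two-dimensional array in the displacement $(\Delta c_1,\Delta c_2)$, is the two-dimensional convolution of its two half-window arrays and is computed by FFT in near-linear time in the array size; the truncated quantities and the vectors $\hat V_t$ are then recovered from the untruncated ones via \cref{couple-prop} and \cref{vvprop99}. This replaces the generic $n\sum_e\eta_e^\omega$ term of \cref{multi-aprop} with a cost near-linear in the total state space. Finally I invoke \textsc{Fool} through \cref{multi-aprop}, with accuracy $\Theta(\eps)$, on the single automaton combining all $m$ edge automata; careful accounting of $\eta=\sum_e\eta_e$, of $\sigma=\sum_t|\Omega_t|$, and of the $\lg n$ levels, under the parameter choices above, yields the stated $\tilde O(mn^3/\eps^4)$ processor bound with polylogarithmic runtime.

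The main obstacle I expect is precisely the construction and analysis of the pessimistic estimator $\psi_e$ (optimization (ii)): it must simultaneously be a one-sided bound on the cut indicator, be $O(\eps)$-accurate in expectation against the Gaussian, have a Lipschitz constant that does not grow as $t\to n$ so that $\mathfrak V_e$ stays $\tilde O(1)$, and remain compatible with the $\bot$/reject-state bookkeeping of \cref{sec:counter}. A secondary difficulty is making the FFT-based convolutions rigorous alongside the truncated automaton --- which I would resolve by performing all convolutional work on the untruncated automaton and transferring via \cref{couple-prop} --- and balancing the Gaussian discretization so as to keep $\sigma$ small while still preserving the sign statistics.
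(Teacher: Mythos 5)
Your derivation of the corollary itself is exactly the paper's: invoke the near-linear-work parallel SDP solver of \cite{ALO16} and then apply \cref{maxcut-thm} with constant $\eps$, rescaling $\eps$ to absorb both losses. Your accompanying sketch of \cref{maxcut-thm} also follows the paper's route essentially step for step (truncated discretized Gaussians, quantized two-dimensional counter automata per edge, a damped pessimistic-estimator weight, FFT convolutions for the transition matrices, then \textsc{Fool}). One quantitative point to correct in that sketch: the per-edge total variability is not $\tilde O(1)$. The estimator $\min\{1,|c_i-c_j|/\eps\}$ contributes Lipschitz factor $1/\eps$ in the real-valued counter coordinates, so a single-step change of magnitude $O(|v_{it}|+|v_{jt}|)$ perturbs the weight by $\tilde O(\eps^{-1}(|v_{it}|+|v_{jt}|))$, and summing with $\sum_t|v_{it}|\le\sqrt n$ gives $\mathfrak V=\tilde O(\sqrt n/\eps)$ (\cref{ija-prop}(d)); consequently \textsc{Fool} must be run with accuracy $\eps'=\eps/\sqrt n$ rather than $\Theta(\eps)$, which is precisely what drives the $n\eta/\eps'^2$ term to $\tilde O(n^3/\eps^4)$ per edge. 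With accuracy $\Theta(\eps)$ as you propose, the error guarantee $\eps'\mathfrak V(s)$ would be $\tilde O(\sqrt n)$ and hence vacuous.
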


The remainder of the section is devoted to showing \Cref{maxcut-thm}. To avoid cumbersome calculations, we will show an algorithm for a $(1 - O(\eps))$-approximation, assuming $\eps$ is smaller than any needed constants. For readability, we write $\tilde O()$ throughout to suppress any $\polylog(n/\eps)$ terms.

\bigskip

To apply our derandomization framework, we must first quantize the continuous Gaussian variables. Concretely, let us fix the quantization parameter $$
\gamma = \frac{\eps}{C \sqrt{n \log(n/\eps)}} \qquad \text{for a sufficiently large constant $C > 0$.}
$$

Each $R_t$ is derived by truncating a standard Gaussian to within $\pm C \log(n/\eps)$ and then rounding to the nearest multiple of $\gamma$. Then, each term  $v_{i,t} R_t$ is rounded again to the nearest multiple of $\gamma$.  Since this comes up frequently in the analysis, we denote the ``rounded" inner product by:
$$
v \star r = \gamma \sum_t \big \lceil \tfrac{1}{\gamma} v_{t} r_t \big \rfloor
$$

We will form our cut set to be $\mathcal C = \{ i: v_i \star R \geq 0 \}$.

\begin{lemma}
\label{coupling-prop}
There is a coupling between random variables $X$ and $r$ such that, for any $u \in \mathbb R^n$ with $||u||_{\infty} \leq 1$, there holds $\Pr \bigl( | u \bullet X -  u \star R | > \eps  \bigr) \leq (\eps/n)^{10}.$
\end{lemma}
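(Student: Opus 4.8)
The plan is to decompose the total error $u \bullet X - u \star r$ into three independent sources and bound each with a Gaussian/Bernstein-type tail inequality. Writing $u \star r = \gamma \sum_k \lceil \tfrac1\gamma u_k r_k \rfloor$, where $r_k$ is the standard Gaussian $X_k$ first truncated to $[-R,R]$ and then snapped to the nearest multiple of $\gamma$, we have the telescoping identity
\[
u \bullet X - u \star r = \sum_k u_k (X_k - \bar X_k) \;+\; \sum_k u_k(\bar X_k - r_k) \;+\; \sum_k \bigl( u_k r_k - \gamma \lceil \tfrac1\gamma u_k r_k \rfloor \bigr),
\]
where $\bar X_k$ denotes the truncation of $X_k$ to $[-R,R]$. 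The first I would call the \emph{truncation error}, the second the \emph{rounding-of-$r$ error}, and the third the \emph{product-rounding error}; I would show each is at most $\eps/3$ in absolute value except with probability $\le \tfrac13 (\eps/n)^{10}$, then union bound.

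For the truncation error: with $R = C\log(n/\eps)$, a single standard Gaussian exceeds $R$ in absolute value with probability at most $e^{-R^2/2} \le (\eps/n)^{20}$ for $C$ large, so by a union bound over the $n$ coordinates, with probability $\ge 1 - n(\eps/n)^{20}$ we have $X_k = \bar X_k$ for every $k$ and this term vanishes entirely. For the product-rounding error: each summand $u_k r_k - \gamma\lceil \tfrac1\gamma u_k r_k\rfloor$ is deterministically bounded by $\gamma/2$ in absolute value, and (this is the point of snapping to the grid with the round-to-nearest rule) has the right sign structure to be treated as a bounded, roughly mean-zero perturbation; summing $n$ terms each of size $\le \gamma/2 = \frac{\eps}{2C\sqrt{n\log(n/\eps)}}$ gives a worst-case bound of $\frac{\eps\sqrt n}{2C\sqrt{\log(n/\eps)}}$, which is too weak as a crude $\ell_1$ bound, so I would instead use that $\sum_k u_k^2 = 1$ together with a Hoeffding/Bernstein bound treating the rounding residuals as independent bounded variables to get concentration at scale $\gamma\sqrt{\sum_k \mathbf 1} \cdot \sqrt{\log(n/\eps)}$ — actually cleaner: bound $|\sum_k (\text{residual}_k)| \le \gamma \sum_k \mathbf 1_{u_k r_k \notin \gamma\Z}$ is not right either; the honest route is Hoeffding with per-term range $\gamma$, giving $|\cdot| \lesssim \gamma\sqrt{n\log(n/\eps)} = \eps/C$, which suffices for $C$ large. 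The second term is handled identically since $\bar X_k - r_k$ is also a rounding residual of size $\le \gamma/2$.

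The main obstacle I anticipate is the product-rounding term: the naive $\ell_1$ bound $\sum_k \gamma/2 \le \gamma n/2$ is $\Theta(\eps\sqrt{n/\log(n/\eps)})$, far larger than $\eps$, so one genuinely needs a concentration argument, and the residuals $u_k r_k - \gamma\lceil\tfrac1\gamma u_k r_k\rfloor$ are \emph{not} independent mean-zero variables in an obvious clean way — their distribution depends on $u_k$ and on where $u_k r_k$ falls relative to the grid, and they are deterministic functions of the $r_k$'s. I would resolve this by noting the $r_k$ are independent across $k$, hence so are the residuals, and each lies in $[-\gamma/2,\gamma/2]$, so Hoeffding's inequality applies directly to $\sum_k (\text{residual}_k)$ regardless of the means, yielding $\Pr(|\sum_k \text{residual}_k| > \eps/3) \le 2\exp(-\tfrac{2(\eps/3)^2}{n\gamma^2}) = 2\exp(-\Omega(C^2 \log(n/\eps)))$, which is $\le \tfrac13(\eps/n)^{10}$ for $C$ large enough. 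The same computation dispatches the $\bar X_k - r_k$ sum. Combining the three failure probabilities by a union bound completes the proof, and the only real care needed is choosing $C$ uniformly large enough to simultaneously beat the $(\eps/n)^{10}$ target in all three estimates.
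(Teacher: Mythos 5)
Your three-way decomposition and the use of Hoeffding on the per-coordinate rounding residuals is essentially the route the paper takes (the paper folds your second and third error terms into a single residual $Y_k = u_k X_k - \gamma\lceil \tfrac1\gamma u_k r_k\rfloor$ after discarding the truncation event, but the bookkeeping is the same). However, there is one genuine gap, and it sits exactly at the point you flagged and then waved away: Hoeffding's inequality does \emph{not} apply ``regardless of the means.'' It controls the deviation of $\sum_k Z_k$ from $\E\bigl[\sum_k Z_k\bigr]$, so to conclude that $\Pr\bigl(|\sum_k Z_k|>\eps/3\bigr)$ is small you must separately bound the systematic drift $|\sum_k \E[Z_k]|$. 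A priori each rounding residual could have mean of order $\gamma$ with a consistent sign (its distribution depends on where the grid $\gamma\Z$ falls relative to the support of $u_k r_k$), in which case $|\sum_k\E[Z_k]|$ could be as large as $\Theta(n\gamma)=\Theta\bigl(\eps\sqrt{n/\log(n/\eps)}\bigr)$ --- precisely the too-large $\ell_1$ bound you were trying to escape. As written, your argument does not rule this out.

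The missing ingredient, which is the one substantive observation in the paper's proof, is that each residual is \emph{symmetric about zero} and hence mean zero: $X_k$ is symmetric, truncation to $[-R,R]$ and round-to-nearest are odd maps (ties land on a set of $X_k$-measure zero), so $r_k$ is symmetric, and $u_kX_k - \gamma\lceil\tfrac1\gamma u_kr_k\rfloor$ is an odd function of a symmetric variable. Once mean-zero is established, your Hoeffding computation goes through verbatim, and the remainder of your proof (the Gaussian tail for truncation, the union bound, and the choice of $C$) is correct.
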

\begin{proof}
If $|X_t| \leq C \log(n/\eps)$, set $R_t$ to be the multiple of $\gamma$ closest to $X_t$; otherwise, set $R_t = 0$.

 Let $Y_t = u_t X_t - \gamma \lceil \tfrac{1}{\gamma} u_t R_t \rfloor$, so that $u \bullet X -  u \star R = \sum_t Y_t$. Note that when $|X_t| \leq C \log(n/\eps)$ we have $|Y_t| \leq 2 \gamma$.
Furthermore, since we round to the nearest value, the distribution of $Y_t$ is symmetric around zero. So $\sum_t Y_t$ is a sum of $n$ independent random variables, each of mean zero and bounded in the range $\pm 2 \gamma$ (barring the extraordinary event that $|X_t| > C \log(n/\eps)$). By Hoeffding's inequality, and taking into account the possibility of a larger deviation, we thus have
$$
\Pr \bigl( | u \bullet X -  u \star R | > \eps \bigr) \leq n \cdot e^{-\Omega(C \log(n/\eps))} + e^{ -\Omega \bigl( \frac{\eps^2 }{(2 \gamma)^2 n} \bigr)}
$$

For $\gamma = \frac{\eps}{C \sqrt{n \log(n/\eps)}}$ and large enough $C$,  this is at most $(\eps/n)^{10}$.
\end{proof}

We construct an automaton $F_{ij}$ for each edge $ij \in E$ to  keep track of the running sums $c_i =  v_i \star R$ and $c_j = v_j \star R$. It uses the following weight function for the final state $z = (c_i, c_j)$:
$$
W_{ij}(z) = \begin{cases} 
0 & \text{if $\sgn(c_i) = \sgn(c_j)$} \\
\min \{1, |c_i|/\eps, |c_j|/\eps \} &  \text{if $\sgn(c_i) \neq \sgn(c_j)$}
\end{cases}
$$

Note that the edge $ij$ is cut if and only if $W_{ij}(z) > 0$. We remark that the obvious weight function would simply be the indicator function that $\sgn(c_i) \neq \sgn(c_j)$. Our weight function serves as a pessimistic estimator, with much better Lipschitz properties.

\begin{proposition}
\label{ija-prop0}
For $\vec R \sim \Omega$, the final state $ z = F_{ij}(\vec R, 0)$  satisfies $\E[W (z) ] \geq \frac{\arccos ( v_i \bullet v_j) }{\pi}  - O(\eps)$. 
\end{proposition}
\begin{proof}
Let $\mathcal E$ denote the event that $\sgn(v_i \star R) \neq \sgn(v_j \star R)$ and $|v_i \star R| > \eps$ and $|v_j \star R| > \eps$. When $\mathcal E$ occurs, we have $W(z) = 1$. So $\E[W(z)] \geq \Pr(\mathcal E)$.   We use the coupling in \Cref{coupling-prop} to calculate
        \begin{align*}
  &      \Pr( \mathcal E ) = 2 \Pr( v_i \star R > \eps \wedge v_j \star R < -\eps ) \tag {by symmetry} \\
        & \quad \geq 2 \Pr( v_i \bullet X > 2 \eps \wedge v_j \bullet X < -2 \eps ) - 2 (\eps/n)^{10}  \tag{by \Cref{coupling-prop}} \\
        & \quad \geq 2 \Pr( v_i \bullet X > 0 \wedge v_j \bullet X < 0) - \Pr ( |v_i \bullet X| < 2 \eps ) - \Pr (  |v_j \bullet X| < 2 \eps)  - O(\eps) \\
        & \quad \geq \Pr( \sgn(v_i \bullet X) \neq \sgn(v_j \bullet X) ) - O(\eps) 
        \end{align*}
        where the last inequality holds since $v_i \bullet X$ and $v_j \bullet X$ are standard Gaussian variables. As in the usual MAX-CUT argument, we have $\Pr( \sgn(v_i \bullet X) \neq \sgn(v_j \bullet X) )  = \frac{\arccos(v_i \bullet v_j)}{\pi}$.
\end{proof}

\begin{proposition}
\label{yuu1}
For any state $s$ at time $t$, the edge-automaton $F_{ij}$ satisfies
$$
\mathfrak C^{F_{ij}}(s) \leq \tilde O \Biggl(  \sqrt{ \frac{ v_{i,t}^2 }{\eps^2 + v_{i,t}^2 + \dots + v_{i, n}^2}} + \sqrt{ \frac{ v_{j,t}^2 }{\eps^2 + v_{j,t}^2 + \dots + v_{j, n}^2}}+ \eps/n \Biggr)
$$
\end{proposition}
\begin{proof}
 Let $A_{i} = \gamma \sum_{k < t} \lceil \frac{v_{i,k} R_{k}}{\gamma} \rfloor$ and $B_i =  \gamma  \sum_{k > t} \lceil \frac{ v_{i,k} R_{k}}{\gamma} \rfloor$, so that $c_i = A_i + B_i + \gamma \lceil \frac{v_{i,t} R_t}{\gamma} \rfloor$; similarly define $A_j$ and $B_j$.  When  computing $\mathfrak C^{F_{ij}}(s)$, we look at the expected change in the final state's weight, assuming all variables up to time $t$ are fixed while variables $R_{t+1}, \dots, R_{n-1}$ are drawn independently from $\Omega$. In particular, $A_i, A_j$ are fixed while $B_i, B_j$ retain their original distribution.  

Let $\nu = 2 (|v_{i,t}| + |v_{j,t}|) \cdot C \log(n/\eps)$. Note that changing $R_t$ can change the final weight $W(z)$ by at most $\min\{1,  \nu/\eps\}$. Furthermore, if $|A_i + B_i| > \eps + \nu$ and $|A_j + B_j| > \eps + \nu$, then $|c_i|, |c_j| > \eps$ irrespective of $R_t$. In such a case, changing $R_t$ has no effect on $W(z)$. So
\begin{align*}
\mathfrak C(s) &\leq \min\{1, \nu/\eps \} \cdot \Pr \bigl( |A_i + B_i| \leq \eps + \nu \vee |A_j + B_j| \leq \eps + \nu \bigr) 
\end{align*}
where the probability is taken over drawing $R_{t+1}, \dots, R_n \sim \Omega$.

Define vector $v'_i$ by zeroing out coordinates $0, \dots, t$ of $v_i$. So $B_i = v'_i \star R$, and $|A_i+B_i| \leq \eps + \nu$ if and only if $v'_i \star R \in [ -A_i -  \eps - \nu,  -A_i +  \eps + \nu]$. From the coupling of \Cref{couple-prop}, we get \begin{align*}
&\Pr \bigl( |A_i + B_i| \leq \eps + \nu \bigr) =  \Pr \bigl( v'_i \star R \in [ -A_i -  \eps - \nu,  -A_i +  \eps + \nu] \bigr) \\
& \qquad  \leq \Pr \bigl( v'_i \bullet X\in [ -A_i -  2 \eps - \nu,  -A_i +  2 \eps + \nu ] \bigr) + (\eps/n)^{10}
\end{align*}

Since $v'_i \bullet X$ is a Gaussian with mean zero and variance $\sigma^2 = v_{i,t+1}^2 + \dots, v_{i,n-1}^2$, the probability it lies in a region of width $4 \eps + 2 \nu$ is at most $\min\{1, \frac{ 4 \eps + 2 \nu}{ \sqrt{2 \pi \sigma^2}} \}$. Overall, we see that
$$
\Pr_{R_{t+1}, \dots, R_n \sim \Omega} \bigl( |A_i + B_i| \leq \eps + \nu \bigr) \leq  O\Bigl( \min\{1, \frac{\eps + \nu}{ \sqrt{v_{i,t+1}^2 + \dots, v_{i,n-1}^2}} \} \Bigr) + (\eps/n)^{10}.
$$

A completely analogous bound holds for $v_j$. Overall, we have
\begin{align*}
\mathfrak C(s) &\leq \min\{1, \nu/\eps\} \cdot O\Bigl( \min\{1, \frac{\eps + \nu}{ \sqrt{v_{i,t+1}^2 + \dots, v_{i,n-1}^2}} \} +  \min\{1, \frac{\eps + \nu}{ \sqrt{v_{j,t+1}^2 + \dots, v_{j,n-1}^2}} \} + (\eps/n)^{10} \Bigr)
\end{align*}

After some rearrangement, this gives the claimed bound.
\end{proof}

\begin{lemma}
\label{yuu0}
The edge automaton $F_{ij}$ has total variability $\tilde O( \sqrt{n} )$.
\end{lemma}
\begin{proof}
The starting state $0$ satisfies 
$$
\mathfrak V^{F_{ij}}(0) \leq \sum_{t=0}^{n-1} \max_{s \in S_t} \mathfrak C^{F_{ij}}(s) \leq \tilde O \Biggl( \sum_{t=0}^{n-1}  \sqrt{ \frac{ v_{i,t}^2 }{\eps^2 + v_{i,t}^2 + \dots + v_{i, n-1}^2}} + \sqrt{\frac{ v_{j,t}^2 }{\eps^2 + v_{j,t}^2 + \dots + v_{j, n-1}^2}}+ \eps/n \Biggr).
$$

We show in \Cref{combb1} in the Appendix (taking $u = \eps^2, p = 1/2$, and $a_k = v_{i,n-k}^2: k = 1, \dots, n$) that
$$
\sum_{t=0}^{n-1} \sqrt{ \frac{ v_{i,t}^2 }{\eps^2 + v_{i,t}^2 + \dots + v_{i, n-1}^2}} \leq O( \sqrt{n \log(1/\eps)} )
$$

A completely analogous bound holds for the vector $v_j$.
\end{proof}

There are two further optimizations to reduce the processor complexity.
First, we apply the method of \Cref{sec:counter} to reduce the automata state space. Second, we take advantage of the counter structure to efficiently compute the transition matrices.

\begin{proposition}
\label{automaton-fi-prop}
For the automaton $F_{ij}$, there is a guard $G_{ij}$ with parameters $\delta = (\eps/n)^{10}$ and  $\lambda_t = \tilde O(v_{i,t} + v_{j,t})$. Its truncated automaton $\tilde F_{ij}$ has statesize $\tilde O(n/\eps^2)$.
\end{proposition}
\begin{proof}
For each vertex $i$, we can construct an automaton $F_i$ to track $c_i = v_i \star R$. The scaled sum $\frac{1}{\gamma} (v_i \star R)$ is an integer-valued counter, so we can apply the construction of \Cref{counterxx}. Since $|R_t| \leq \tilde O(1)$ and $v_i$ is a unit vector, we get $M_t \leq \tilde O( |v_{i,t}|/\gamma )$ and $M \leq \tilde O(1/\gamma)$.  We can also calculate $$
\kappa = \tfrac{1}{\gamma^2} \E[ (v_i \star R)^2 ] \leq O(1/\gamma^2) \sum_t \E[ (v_{i,t} R_t)^2 ] \leq \tilde O(1/\gamma^2) \sum_t  v_{i,t}^2 = \tilde O(1/\gamma^2).
$$

With these parameters, we can choose $\delta' = \delta/2$ and span $B = \ceil{1000 (M + \sqrt{\kappa}) \log(n/\delta')}  = \tilde \Theta(1/\gamma)$.  This gives a statesize of $\tilde O(1/\gamma)$ for automaton $\tilde F_i$. 

Note that $F_{ij}$ is simply the Cartesian product $F_i \times F_j$. Given the automata $F_i, F_j$ with their guards, we use \Cref{cartesian-lemma} to build a guard for automaton $F_{ij}$; it has concentration $2 \delta' = \delta$, has Lipschitz vector $\lambda_t = \tilde O( v_{i,t} + v_{j,t} )$, and has statesize $\tilde O(1/\gamma^2) = \tilde O(n/\eps^2)$.
\end{proof}

\begin{proposition}\label{prop-maxcut_omega}
For the automaton $F_{ij}$, we can compute all transition matrices $T_{t,t+h}$ for $h$ a power of two and $t$ divisible by $h$ using $\tilde O( n^3 / \eps^2)$ processors.
\end{proposition}
\begin{proof}
Since $F_{ij}$ is a pair of counters, $T_{t,t+h}(s,s')$ only depends on the \emph{difference} $s' - s$; the starting state $s$ is irrelevant. Instead of storing the full matrix $T_{t,t+h}$, we only need to keep track of the probability distribution  on pairs
\begin{equation}
\label{tragg2}
\bigl( \sum_{k = t}^{t+h - 1} \lceil  v_{i,k} R_k / \gamma \rfloor, 
\sum_{k = t}^{t+h - 1} \lceil v_{j,k} R_k / \gamma \rfloor \bigr)
\end{equation}

In particular, the transition matrix $T_{t,t+2h}$ is given as a two-dimensional convolution of the distributions for $T_{t,t+h}$ and $T_{t+h,t+2h}$.  It is well-known that convolutions can be computed in near-linear work and logarithmic depth via Fast Fourier Transform (see e.g., \cite{L14}). Thus, each distribution $T_{t,t+2h}$ can be obtained from $T_{t,t+h}, T_{t+h, t+2h}$ with a processor count of $\tilde O(n^2/\eps^2)$. The total processor count is at most $\tilde O(n^3/\eps^2)$.
\end{proof}

\begin{proposition}
\label{geew}
Automaton $\tilde F_{ij}$ has total variability $\tilde O(\sqrt{n})$. 

Its final state $\tilde z$ has $\E[ \tilde W_{ij} (\tilde z) ] \geq  \frac{\arccos(v_i \bullet v_j)}{\pi} - O(\eps)$.

For any state $s$, we have  $| T^{\tilde F_{ij}}(s, \tilde W) - T^{F_{ij}}(s, \tilde W)| \leq \delta$.
\end{proposition}
\begin{proof}
We have already shown that $F_{ij}$ has total variability $\tilde O(\sqrt{n})$ and its final state $z$ has  $\E[W(z)] \geq \frac{\arccos(v_i \bullet v_j)}{\pi} - O(\eps)$. We apply  \Cref{vvprop99} to transfer these bounds to $\tilde F_{ij}$, noting that $\tilde \Delta \leq \Delta \leq 1$ and $\sum_t \lambda_t = \tilde O( \sum_t |v_{i,t}| + |v_{j,t}|) \leq \tilde O(\sqrt{n})$.
\end{proof}

\paragraph{Finishing up} We are now ready to compute the total algorithm complexity. 

 Each automaton $\tilde F_{ij}$ has total variability $\tilde O(\sqrt{n})$ and statesize $\tilde O(n/\eps^2)$.  By \Cref{prop-generic_omega} and \Cref{prop-maxcut_omega}, we compute approximate transition vectors $$
\hat V_t(s) = T_{t,n}^{F_{ij}}(s,\tilde W_{ij})
$$
in a total of $\tilde O(n^3/\eps^2)$ processors.  Although these are computed for $F_{ij}$, they achieve error $\beta = \delta$ when compared to the transition matrices for the truncated automaton $\tilde F_{ij}$.

Overall we have $\sigma = \tilde O(\sqrt{n}/\eps)$ and $\eta = \tilde O( m n / \eps^2)$. We run \textsc{Fool} with parameter  $\eps' = \tilde \Theta(\eps/\sqrt{n}) $,  with total cost $\tilde O (m  n^3/\eps^4  + m n^2/\eps^3)$.

Now each edge $ij$ is only cut if the final state $z$ satisfies $\tilde W_{ij}(z) > 0$. By \Cref{prop-add-err-add} and \Cref{geew}, the resulting distribution $D$ satisfies
$$
\Pr_{\vec R \sim D}( \text{edge $ij$ cut} ) \geq \E_{\vec R \sim D} [ \tilde W_{ij}(z) ] \geq \E_{\vec R \sim \Omega} [ \tilde W_{ij}(z) ] - O(\eps) \geq \frac{\arccos( v_i \bullet v_j)}{\pi} - O(\eps)
$$

Summing over all edges, the  expected weight of the cut edges is 
$$
\sum_{ij \in E} w_e \frac{\arccos( v_i \bullet v_j)}{\pi} - O(\eps) \sum_{ij \in E} w_e
$$

The first term is at least $\alpha \cdot \text{OPT}$. The second term is at most $O(\eps \cdot \text{OPT})$ since $\text{OPT} \geq \sum_e w_e/2$.

By searching the distribution $D$ exhaustively, we can find a cut satisfying these bounds.

\section{Acknowledgements}
Thanks to Mohsen Ghaffari and Christoph Grunau for explaining the works \cite{GG23} and \cite{GGR23}.
\appendix

\section{Some omitted proofs}
\label{app44}

\begin{proposition}
\label{app440}
For any time $t$ and state $s \in S_t$ and input $r \in \Omega_t$ we have
$$
|V_{t+1}(F(s,r)) - V_t(s)| \leq \mathfrak C(s)
$$
\end{proposition}
\begin{proof}
Expanding the definition of $V_{t}$, we have 
$$
V_{t}(s)  = T_{t,n}(s,W) = \sum_{r'} p_{\Omega_{t}}(r') T_{t+1, n}(F(s',r'), W) = \sum_{r'} p_{\Omega_{t}}(r') V_{t+1}(F(s,r'))
$$

 So $|V_{t+1}( F(s,r )) - V_{t}(t)|  \leq \sum_{r'} p_{\Omega_{t}}(r') |V_{t+1}( F(s,r )) - V_{t+1}( F(s,r'))|$. For such $r, r'$, let $z, z'$ be the final states obtained from  $F(s,r), F(s,r')$ respectively under a common drivestream drawn from $\Omega_{t+1, n}$.
Then $V_{t+1}(F(s,r)) = \E[ W(z) ]$ and $V_{t+1}(F(s,r')) = \E[W(z')]$. So 
$$
|V_{t+1}( F(s,r )) - V_{t+1}( F(s,r'))| = | \E[ W(z) - W(z') ] | \leq \E[ |W(z) - W(z')| ]
$$
which is at most $\mathfrak C(s)$ by definition.
\end{proof}

\begin{proof}[Proof of \Cref{prop-add-err0s}]
Since $V_{t+h}$ and $\hat V_{t+h}$ have entrywise difference at most $\beta$, we clearly have 
$$
\mathfrak S_{t,t+h} (s, \hat V_{t+h}) \leq \mathfrak S_{t,t+h}(s, V_{t+h}) + 2 \beta
$$

Now let $\vec r$ be a drivestream in $\Omega_{t,t+h}$, consider the state sequence $s_t = s, s_{t+1} = F(s_t, r_t), s_{t+2} = F(s_{t+1}, r_{t+1}), \dots, s_{t+h} = F(s,\vec r)$. By \Cref{app440}, we have 
\begin{align*}
|  V_{t+h}(F(s,\vec r)) -V_t(s)|  &\leq \sum_{k=t}^{t+h-1} | V_{k+1}(s_{k+1}) - V_{k}(s_{k}) | \leq \sum_{k=t}^{t+h-1} \mathfrak C(s_k).
\end{align*}

Since any such state $s_k$ is reachable from $s$, and since this holds for any drivestream $\vec r$, this implies that
$$
\mathfrak S_{t,t+h} (s, V_{t+h}) \leq 2 \sum_{k=t+1}^{t+h} \max_{\vec r \in \Omega_{t,k}} \mathfrak C( F(s,\vec r)).
$$

The claim follows.
\end{proof}

For the MAX-CUT application, we have the following useful general lemma:
\begin{lemma}
\label{combb1}
Let $a_1, \dots, a_n, u$ be non-negative values and let $p \in (0,1]$. Then 
$$
\sum_{i=1}^n \Bigl( \frac{a_i}{u + a_1 + \dots  + a_i} \Bigr)^p  \leq n^{1-p} \log^{p} \Bigl( 1+\frac{a_1 + \dots + a_n}{u} \Bigr)
$$
\end{lemma}
\begin{proof}
Let $s = a_1 + \dots + a_n$. We first show the bound for $p = 1$. Here, we have
\begin{align*}
\sum_{i=1}^n  \frac{a_i}{u + a_1 + \dots  + a_i}  &\leq \sum_{i=1}^n  \int_{t=u + a_1 + \dots + a_{i-1}}^{t=u + a_1 + \dots + a_i } \frac{\mathrm{d}t}{t} = \int_{t=u}^{t=a_1 + \dots + a_n + u} \frac{\mathrm{d}t}{t} \\
&= \log \frac{ a_1 + \dots + a_n + u}{u} = \log (1 + s/u)
\end{align*}

Next, for $p < 1$, we use Jensen's inequality applied to the concave-down function $x \mapsto x^p$:
\begin{align*}
\sum_{i=1}^n  \Bigl( \frac{a_i}{u + a_1 + \dots  + a_i} \Bigr)^p &\leq n \cdot \Bigl( \frac{1}{n} \sum_{i=1}^n \frac{a_i}{u + a_1 + \dots  + a_i} \Bigr)^p \\
&\leq n \cdot \Bigl( \frac{1}{n}  \log(1 + s/u) \Bigr)^p = n^{1-p} \log^p(1 + s/u). \qedhere
\end{align*}
\end{proof}

\bibliographystyle{alpha}
\bibliography{references}

\end{document}